\newtheorem{defn}{Definition}[section]
\newtheorem{lemma}[defn]{Lemma}
\newtheorem{cor}[defn]{Corollary}
\newtheorem{thm}[defn]{Theorem}
\newtheorem{observation}{Observation}
\newcommand*{\qed}{\hfill\ensuremath{\square}}%
\newenvironment{proof}{\vspace{1ex}\noindent{\bf Proof:}\hspace{0.5em}}
	{\hfill\qed\vspace{1em}}
\newcommand{\Write}{\ensuremath{\textsc{write}}}
\newcommand{\Read}{\ensuremath{\textsc{read}}}
\newcommand{\CAS}{\ensuremath{\textit{CAS}}}
\newcommand{\PntCont}{\ensuremath{\textrm{PntCont}}}
\definecolor{light-gray}{gray}{0.5}
\title{Space Bounds for Reliable Multi-Writer Data Store:\\ Inherent
  Cost of Read/Write Primitives}
\author{
  Gregory Chockler\\
  Royal Holloway, University of London\\
  \texttt{Gregory.Chockler@rhul.ac.uk}
  \and
  Dan Dobre\\
  NEC Labs Europe\\
  \texttt{dan.dobre@neclab.eu}
  \and
  Alexander Shraer\\
  Google, Inc.\\
  \texttt{shralex@google.com}
  \and
  Alexander Spiegelman\\
  Technion\\
  \texttt{sashas@tx.technion.ac.il}
}
\date{}
\begin{document}

\maketitle
\thispagestyle{empty}

\begin{abstract}
  Reliable storage emulations from fault-prone components have
  established themselves as an algorithmic foundation of modern
  storage services and applications. Most existing reliable storage
  emulations are built from storage services supporting arbitrary
  read-modify-write primitives. Since such primitives are not
  typically exposed by pre-existing or off-the-shelf components (such
  as cloud storage services or network-attached disks) it is natural
  to ask if they are
  indeed essential for efficient storage emulations. In this paper, we
  answer this question in the affirmative. We show that relaxing the
  underlying storage to only support read/write operations leads to a
  linear blow-up in the emulation space requirements.
  We also show that the space complexity is not adaptive to
  concurrency, which implies that the storage cannot be reliably
  reclaimed even in sequential runs. On a positive side, we show that
  Compare-and-Swap primitives, which are commonly available with many
  off-the-shelf storage services, can be used to emulate a reliable
  multi-writer atomic register with constant storage and adaptive time
  complexity.

\end{abstract}

\vspace*{3cm}

\noindent\textbf{Corresponding author:}\\
\noindent Gregory Chockler\\
\noindent Royal Holloway, University of London\\
\noindent Egham TW11 0RP 
\noindent United Kingdom\\
\noindent \textbf{tel:} +44 (0)1784 443690\\
\noindent \texttt{Gregory.Chockler@rhul.ac.uk}\\\\

\noindent\textbf{Regular submission}\\

\noindent\textbf{Eligible to be considered for the best student paper
  award}: Alexander Spiegelman is a full-time student

\newpage
\setcounter{page}{1}

\section{Introduction}

Reliable storage emulations seek to construct fault-tolerant shared
primitives, such as read/write registers, from a collection of
failure-prone components, such as storage servers, or network-attached
disks. These emulations are core enablers of many modern storage
services and applications, such as cloud and online data
stores~\cite{pnuts,riak,spinnaker,zookeeper,mongodb} and
Storage-as-a-Service
offerings~\cite{s3,simple-db,dynamodb,azure-storage}.

Most existing emulation algorithms are constructed from storage
services capable of supporting custom-built read-modify-write (RMW)
primitives~\cite{ABD95,EnglertS00,EnglertS00,rambo,DuttaGLV10,GeorgiouNS09,DBLP:journals/eatcs/AguileraKMMS10}. For
example, the ABD algorithm~\cite{ABD95}, emulating a fault-tolerant
atomic read/write register from crash-prone nodes, assumes that each
node has an ability to test and update the stored data along with its
associated metadata in a single atomic step.  In reality though
reliable storage services must often be built from pre-existing or
off-the-shelf building blocks (such as network-attached disks or cloud
storage services), which typically offer a set collection of
read/write capabilities sometimes augmented with simple conditional
update primitives similar to Compare-and-Swap (CAS).


In this paper, we study the question of what {\em minimal}\/
functionality must be supported by fault-prone storage nodes to enable
space-efficient emulations of reliable storage primitives. We start by
considering storage servers equipped with read/write primitives, which
we abstract as read/write atomic registers. A notable prior work
assuming a similar setting is Disk Paxos~\cite{2003:gafni}, which
builds a reliable consensus service from crash-prone network attached
disks. Interestingly, in Disk Paxos, each client is allocated a
dedicated register on each server, which naturally leads to the
question if linear space is necessary for constructing reliable
multi-writer storage from fault-prone read/write primitives.


In Section~\ref{sec:space}, we prove that this is indeed inherent: the
number of registers required to implement a reliable multi-writer
read/write register for $k$ clients from a collection of multi-writer
multi-reader (MWMR) atomic read/write registers hosted on crash-prone
servers requires at least $kf$ registers where $f$ is the maximum
number of tolerated server failures. We further show that no such
algorithm can have its storage consumption adaptive to concurrency,
which implies that the storage costs cannot be further optimized
(e.g., by reclaiming old values) even in sequential runs. Since the
registers can be assigned to the servers in a variety of ways, we
further restrict possible assignments by showing that if the number of
registers per server is bounded by a known constant $m$, then
supporting $\ell m$ clients requires $f + 1$ more servers in addition
to the requisite $\ell f$ servers stipulated by our storage bound. Our
bounds apply to any fault-tolerant implementations of a MWMR register,
which are at least {\em single-writers safe}\/ (a consistency notion
weaker than the standard multi-writer safety~\cite{lamport1,mwr}), and
solo-terminating (a weak liveness condition where only the operations
eventually run in isolation are required to terminate).

We prove our results in a fault-prone shared memory
model~\cite{JCT98,faultyMemoryAfek1993benign,2006:abraham}, which
faithfully captures the settings where constituent storage services
are provided as pre-existing building blocks. Our impossibility proofs
employ a variation of a covering argument~\cite{burns-lynch} to
construct a sequential run where $f$ new registers become covered with
each consecutive write invoked by a client thus gradually exhausting
the available storage capacity.

Understanding the cost of using read/write primitives, we turn our
attention to identifying a simple RMW primitive that can be used to
efficiently support a reliable emulation. We focus on Compare-and-Swap
(CAS), which closely matches a variety of conditional write primitives
available with many of the today's cloud storage service
interfaces~\cite{pnuts,simple-db,dynamodb,azure-storage,mongodb}. In
Section~\ref{sec:rwalg}, we present a constant space emulation of a
MWMR atomic read/write register that utilizes a single CAS object per
server, and tolerates up to a minority of server crashes.  Our
emulation is derived in a modular fashion by first constructing the
ABD update primitive from a single CAS object, and then plugging the
resulting construction into the multi-writer ABD
emulation~\cite{ABD95,rambo}.  We show that the time complexity our
implementation matches that of ABD in contention-free runs, and, at
the worst case, is adaptive to the number of concurrently executing
clients.


\section{Preliminaries}
\label{sec:prelim}

\subsection{Model}

We consider an {\em asynchronous fault-prone shared memory
  system}~\cite{JCT98} consisting of a set of {\em base}\/ objects
${\cal B} = \{b_1,b_2,\dots\}$. The objects are accessed by {\em
  clients}\/ from some set ${\cal C} = \{c_1, c_2, \dots\}$. The
clients interact with base objects via a set of operations supported
by the objects. We will consider base objects supporting either simple
{\em read}\/ and {\em write}\/ (i.e., read/write registers) or {\em
  compare-and-swap (CAS)}\/ operations.

We consider a slight generalization of the model in~\cite{JCT98} where
the objects are mapped to a set ${\cal S} = \{s_1, s_2, \dots\}$ of
servers via a function $\delta$ from ${\cal B}$ to ${\cal S}$. For
$B \subseteq {\cal B}$, we will write $\delta(B)$ to denote the {\em
  image}\/ of $B$, i.e., $\delta(B) = \{\delta(b) : b \in B\}$.
Conversely, for $S \subseteq {\cal S}$, we will write $\delta^{-1}(S)$
to denote the {\em pre-image}\/ of $S$, i.e.,
$\delta^{-1}(S) = \{b : \delta(b) \in S\}$.  Both servers and clients
can fail by crashing. A crash of a server causes all objects mapped to
that server to instantaneously crash\footnote{Note that the original
  faulty shared model of~\cite{JCT98} can be derived from our model by
  choosing $\delta$ to be an injective function.}.

We study algorithms that emulate shared read/write registers to a set
of clients. Clients interact with the emulated register via high-level
read and write operations. To distinguish the high-level emulated
reads and writes from low-level base object access, we refer to the
former as \Read\/ and \Write. We say that high-level operations are
{\em invoked}\/ and {\em return} whereas low-level operations are {\em
  triggered}\/ and {\em respond}. A high-level operation consists of a
series of trigger and respond actions on base objects, starting with
the operation's invocation and ending with its return. Since base
objects are crash-prone, clients must be able to continue executing
without awaiting responses to previously issued operations. Thus, the
trigger actions occur locally at clients without involving any actual
interaction with their target base objects.  Once triggered a
low-level operation can then {\em take effect} (or, be {\em applied}
to) the base object state followed by a response being returned to the
client.

An algorithm $A$ defines the behavior of clients as deterministic
state machines where state transitions are associated with actions,
such as trigger/response of low-level operations. A {\em
  configuration}\/ is a mapping to states from system components,
i.e., clients and base objects. An {\em initial configuration}\/ is
one where all components are in their initial states.

A {\em run}\/ of algorithm $A$ is a (finite or infinite) sequence of
alternating configurations and actions, beginning with some initial
configuration, such that configuration transitions occur according to
$A$. We use the notion of time $t$ during a run $r$ to refer to the
configuration reached after the $t$\textsuperscript{th} action in
$r$. A {\em run fragment}\/ is a contiguous sub-sequence of a run. A
run is {\em write-only}\/ if it has no invocations of the high-level
read operations.

We say that a base object, client, or server is {\em faulty}\/ in a
run $r$ if it fails at some time in $r$, and correct, otherwise. A run
is {\em fair}\/ if (1) for every low-level operation triggered by a
correct client on a correct base object, there is eventually a
matching response, and (2) every correct client gets infinitely many
opportunities to both trigger a low-level operation and execute the
return actions. We say that a low-lever operation on a base object is
{\em pending}\/ in run $r$ if it was triggered but has no matching
response in $r$.

We say that a high-level operation $op_i$ {\em precedes}\/ a high-level
operation $op_j$ in a run $r$, denoted $op_i \prec_r op_j$, if $op_i$
returns before $op_j$ is invoked in $r$. Operations $op_i$ and $op_j$
are concurrent in a run $r$, if neither one precedes the other. A run
with no concurrent operations is {\em sequential}.

\subsection{Storage Service Definitions}

We study storage services emulating a {\em multi-writer/multi-reader
  (MWMR) register}, which stores values from a domain $\mathbb{V}$,
and offers an interface for invoking read and write
operations. Initially, the register holds some distinguished initial
value $v_0 \in \mathbb{V}$.  The sequential specification of the
register is as follows: A read returns the latest written value, or
$v_0$ if none was written.

\noindent %
{\bf Liveness} We consider the following liveness conditions that must
be satisfied in fair runs of an emulation algorithm. A {\em
  wait-free}\/ object is one that guarantees that every high-level
operation invoked by a correct client eventually returns, regardless
of the actions of other clients. A {\em solo-terminating}\/ object
guarantees that every high-level operation that takes steps in
isolation eventually returns.

\noindent
{\bf Safety} Two runs are {\em equivalent}\/ if every client performs
the same sequence of high-level operations in both, where operations
that are pending in one can be either included (with some response) in
or excluded from the other. A {\em linearization}\/ of a run $r$ is an
equivalent sequential run that satisfies $r$'s operation
precedence relation and the object's sequential specification. 

We consider the following safety requirements for an emulation
algorithm. A run of the emulation algorithm satisfies {\em
  atomicity}\/ if it has a linearization. An emulated object is {\em
  atomic}\/ (or, {\em linearizable}) if all its runs satisfy
atomicity. For our storage lower bound, we will also consider the
following weak safety guarantee: A run $r$ of the MWMR emulation
algorithm is {\em single-writers}\/ if no two write operations overlap
in $r$: i.e., for any two distinct writes $w_i$ and $w_j$ in $r$
either $w_i \prec_r w_j$ or $w_j \prec_r w_i$. A run $r$ of the MWMR
register emulation algorithm satisfies {\em safety}~\cite{lamport1} if
for every read $rd$ that returns in $r$ and does not overlap any
writes, there exists a linearization $L_{rd}$ of the subsequence of
$r$ consisting of all write operations in $r$ and $rd$. An emulated
MWMR register is {\em single-writers safe (SW-safe)}\/ if all its
single-writers runs satisfy safety.

For our space lower bound, we will restrict our attention to {\em
  single-reader (SR)}\/ emulations where only a single designated
client is allowed to read the emulated register.

\noindent
{\bf Fault-Tolerance} The emulation algorithm is $f$-tolerant if it
remains correct (in the sense of its safety and liveness properties)
as long as at most $f$ servers crash for a fixed $f > 0$.

\noindent
{\bf Complexity measures} The {\em resource consumption} of an
emulation algorithm $A$ in a (finite) run $r$ is the number of base
objects used by $A$ in $r$. The {\em resource complexity}~\cite{JCT98}
of $A$ is the maximum resource consumption of $A$ in all its runs. To
measure running time, we assume that each operation triggered on a
base object takes at most one unit of time to complete, and the local
computation delays are negligibly small. The {\em (asynchronous) time
  complexity}\/ of $A$~\cite{attiya-book} is then the maximum time
required by any client to complete the high-level object invocation.

\noindent
{\bf Adaptivity to Contention} Given a run fragment $r$ of an
emulation algorithm, the {\em point
  contention}~\cite{point-cont-afek,point-cont} of $r$, $\PntCont(r)$,
is the maximum number of clients that have an incomplete high-level
invocation after some finite prefix of $r$. Similarly, we use
$\PntCont(op)$ to denote $\PntCont(r_{op})$, where $r_{op}$ is the run
fragment including all events between the $op$'s invocation and
response.

The resource complexity of $A$ is {\em adaptive to point contention}\/
if there exists a function $M$ such that after all finite runs $r$ of
$A$, the resource consumption of $A$ in $r$ is bounded by
$M(\PntCont(r))$. Likewise, the time complexity of $A$ is {\em
  adaptive to point contention}\/ if there exists a function $T$ such
that for each client $c_i$, and operation $op$, the time to complete
the invocation of $op$ by $c_i$ is bounded by $T(\PntCont(op))$.

\section{Resource Complexity of Emulating SW-Safe MWSR Register}
\label{sec:space}

In this section, we prove that any $f$-tolerant emulation of a
solo-terminating multi-writer/single-reader (MWSR) SW-safe register
for $k$ clients from of a collection of MWMR atomic registers stored
on crash-prone servers has resource complexity $kf$. As there are many
possible ways in which these $kf$ registers can be mapped to the given
set of servers, we further restrict possible mappings by showing that
if the number of registers assigned to each server is at most $m$,
then for any $\ell > 0$, the number of servers required to support
$\ell m$ clients is at least $\ell f + f + 1$. In other words,
supporting that many clients requires extra $f+1$ servers in addition
to $\ell f$ stipulated by our resource complexity bound. For
completeness, we will also show that $2f+1$ servers are necessary
regardless of the individual server capacities though this bound can
also be derived from well-known results (e.g.,
~\cite{Attiya:1990:RAE:79147.79158,Bracha:1985:ACB:4221.214134}). Our
last result shows that the emulation resource complexity cannot be
adaptive to point contention.

Our proof exploits the fact that the environment is allowed to prevent
a pending low-level write from taking effect on the base object states
for arbitrary long.  As a result, a client cannot reliably store a
value in a base register having a pending write (by a different
client) as this write may take effect at a later time thus erasing the
stored value. %
We will reuse the terminology of~\cite{burns-lynch}, and refer to a
pending write operation $W$ on some base register $b$ 
as a {\em covering write}, and to $b$ as being {\em covered}\/ by $W$.

\noindent
For any time $t$ (following the $t$\textsuperscript{th} action) in a
run $r$ of the emulation algorithm we define the following:

\begin{itemize}

\item $C(t)$: the set of clients that have completed a high-level
  write operation on the emulated register at time $\le t$.

\item $Cov(t)$: the set of the base registers that have a covering
  low-level write at time $t$.



\end{itemize}



\noindent
We first prove the following key lemma:

\begin{lemma}
  For all $F\subseteq {\cal S}$ such that $|F|=f$, there exists a
  write-only sequential run $r_i$ of an $f$-tolerant algorithm that
  emulates an SW-safe solo-terminating MWSR register consisting of
  $i\ge 0$ complete high-level writes of values $v_1,\dots,v_i$ by $i$
  distinct clients $c_1,\dots,c_i$, and $t_i$ steps
  such that $|Cov(t_i)| \ge if$, 
  and $\delta(Cov(t_i)) \cap F = \emptyset$.
  \label{lem:exhaustive-run}
\end{lemma}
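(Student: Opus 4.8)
The plan is to prove the lemma by induction on $i$, strengthening the statement to quantify over all $f$-subsets $F$ so that the inductive hypothesis is available for whatever reserved set the inductive step needs. The base case $i=0$ is the empty run: $Cov(t_0)=\emptyset$, so both $|Cov(t_0)|\ge 0$ and $\delta(Cov(t_0))\cap F=\emptyset$ hold vacuously.

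For the inductive step, I would start from a run $r_{i-1}$ supplied by the inductive hypothesis, so that at its final time $(i-1)f$ base registers are covered and $\delta(Cov(t_{i-1}))\cap F=\emptyset$; the corresponding pending low-level writes carry the stale values $v_1,\dots,v_{i-1}$ and are never applied. I then append the solo execution of a fresh client $c_i$ writing a new value $v_i$. By solo-termination this high-level \Write\ returns, yielding the $i$-th complete write and keeping the run write-only and sequential with $v_1\prec_r\cdots\prec_r v_i$.

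The heart of the argument is the covering claim: when $W_i$ returns it must leave at least $f$ \emph{additional} base registers covered on servers outside $F$. I would establish this with two ingredients. First, a fault-tolerance/indistinguishability argument shows $c_i$ must store $v_i$ robustly: if $v_i$ were written (and applied) on too few fresh servers, the adversary could crash those servers and let the old covering writes of $Cov(t_{i-1})$ take effect, producing a configuration indistinguishable to the designated reader from one in which $W_i$ never occurred; since $v_1\prec_r\cdots\prec_r v_i$, SW-safety forces the isolated reader to return $v_i$ in the real run but the older value in the simulated one, a contradiction. Second, I would use the $f$-failure budget to convert robust storage into covering: by freezing $f$ of the servers that $c_i$ must write to (so their low-level writes remain pending), solo-termination together with $f$-tolerance still force $W_i$ to return, and the frozen writes are exactly the new covering writes. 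Combining, $|Cov(t_i)|\ge (i-1)f+f=if$, and, by construction, $\delta(Cov(t_i))\cap F=\emptyset$.

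The step I expect to be the main obstacle is pinning the $f$ new covering writes simultaneously (i) on servers outside the reserved set $F$ and (ii) on registers disjoint from the already-covered set $Cov(t_{i-1})$, while still guaranteeing that $W_i$ terminates within the $f$-crash budget. Plain fault-tolerance only yields redundancy $f+1$, and hence a single server guaranteed to lie outside an arbitrary $F$, so the delicate part is leveraging the universal quantification over $F$ together with an adaptive, freeze-and-replay adversary that first observes where $c_i$'s solo write deposits $v_i$ and then re-runs it with an appropriate set of $f$ fresh, outside-$F$ servers frozen, taking care that withholding those responses neither lets $c_i$ avoid covering them nor exceeds the failure budget on which solo-termination relies.
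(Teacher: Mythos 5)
Your proposal follows essentially the same route as the paper: an inductive covering argument in which (a) a reader-based SW-safety indistinguishability argument forces $W_i$ to trigger low-level writes on fresh (non-covered) registers on more than $2f$ servers, and (b) freezing the writes on $f$ of those servers outside $F$ --- which is indistinguishable to $c_i$ from those servers having crashed, so $f$-tolerance and solo-termination still force $W_i$ to return --- yields the $f$ new covering writes, while the writes landing in $\delta^{-1}(F)$ are allowed to take effect so that $F$ stays clean. The obstacle you flag at the end is resolved in the paper not by a freeze-and-replay adversary (which is problematic, since the client may adaptively write to different registers in the re-run) but by a single-pass adaptive adversary $Ad_i$ that freezes covering writes on the first (up to) $f$ servers outside $F$ to acquire newly covered registers as the run unfolds; the $> 2f$ spreading bound then guarantees this cap of $f$ is actually reached.
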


We construct $r_i$ inductively as follows. First, it is easy to see
that a run $r_0$ consisting of $t_0=0$ steps satisfies the lemma. Next,
fix an arbitrary set of servers $F$ such that $|F|=f$, and assume that
$r_{i-1}$ exists for all $i > 0$. We show how $r_{i-1}$ can be
extended up to time $t_i > t_{i-1}$ so that the lemma holds for the
resulting run.

\noindent
We introduce the following notation for all times $t\ge t_{i-1}$:

\begin{itemize}

\item $Tr_i(t)$: the set of base registers which had a low-level write
  triggered on between $t_{i-1}$ and $t$.

\item $Cov_i(t) = Cov(t) \setminus Cov(t_{i-1})$: the set of base
  registers that have been newly covered between $t_{i-1}$ and
  $t$. Note that $Cov_i(t) \subseteq Tr_i(t)$.

\item $Q_i(t) \subseteq {\cal S}$: the set of servers such that
  $Q_i(t) = \delta(Cov_i(t)) \setminus F$ if
  $|\delta(Cov_i(t)) \setminus F| \le f$, and $Q_i(t) = Q_i(t-1)$,
  otherwise.

\end{itemize}

We will define the following adversarial behaviour of the environment,
which whilst being tolerated by the algorithm causes it to consume a
gradually growing amount of the storage resources:

\begin{defn}[$Ad_i$]: At any time $t\ge t_{i-1}$: prevent the
  following writes from taking effect on the base register states:
  \begin{enumerate}
  \item all covering writes by clients in $C(t_{i-1})$,
    and
  \item all covering writes on the base registers in
    $\delta^{-1}(Q_i(t))$.
  \end{enumerate}
\end{defn}

\begin{observation}
  If the environment behaves like $Ad_i$, then for all
  $t \ge t_{i-1}$, $Q_i(t) \subseteq Q_i(t+1)$.
\label{obs:ad}
\end{observation}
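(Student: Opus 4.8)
The plan is to prove the inclusion by induction on $t \ge t_{i-1}$, but with a strengthened hypothesis that threads a concrete witness through the ``frozen'' branch of the definition of $Q_i$. Concretely, I would carry along the invariant $I(t)$: for each server $s \in Q_i(t)$ there is a base register $b_s$ with $\delta(b_s) = s$ such that $b_s \notin Cov(t_{i-1})$ and $b_s \in Cov(t)$ (hence $b_s \in Cov_i(t)$). Since $\delta(b_s) = s \in Q_i(t)$, this also records that $b_s \in \delta^{-1}(Q_i(t))$, so the covering writes on $b_s$ are exactly the ones protected by clause~2 of $Ad_i$. The base case $t = t_{i-1}$ is immediate: $Cov_i(t_{i-1}) = \emptyset$ forces $Q_i(t_{i-1}) = \emptyset$, so $I(t_{i-1})$ is vacuous and $Q_i(t_{i-1}) \cap F = \emptyset$; I would note that this disjointness from $F$ propagates through both branches of the definition, since one branch intersects with the complement of $F$ and the other copies $Q_i(t-1)$.

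First I would establish the single monotonicity step assuming $I(t)$. Fix $s \in Q_i(t)$ with witness $b_s$. Because $b_s \in \delta^{-1}(Q_i(t))$, clause~2 of $Ad_i$ prevents every covering write on $b_s$ from taking effect at the transition from $t$ to $t+1$; in particular the pending write witnessing $b_s \in Cov(t)$ is still pending at $t+1$, so $b_s \in Cov(t+1)$. As $b_s \notin Cov(t_{i-1})$ this yields $b_s \in Cov_i(t+1)$, and since $s \notin F$ it follows that $s \in \delta(Cov_i(t+1)) \setminus F$. Thus $Q_i(t) \subseteq \delta(Cov_i(t+1)) \setminus F$.

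The claim then follows by a short case split on the definition of $Q_i(t+1)$. If $|\delta(Cov_i(t+1)) \setminus F| \le f$ then $Q_i(t+1) = \delta(Cov_i(t+1)) \setminus F \supseteq Q_i(t)$; otherwise $Q_i(t+1) = Q_i(t)$ by definition and the inclusion is trivial. The very same split re-establishes $I(t+1)$: in the non-frozen case each $s \in Q_i(t+1) = \delta(Cov_i(t+1)) \setminus F$ obtains a witness directly from $Cov_i(t+1)$, while in the frozen case $Q_i(t+1) = Q_i(t)$ and the witnesses built in the previous paragraph already lie in $Cov(t+1)$, hence serve at time $t+1$.

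The step I expect to be the main obstacle is the bookkeeping around the frozen branch $Q_i(t+1) = Q_i(t-1)$: a server may persist in $Q_i$ across steps in which $\delta(Cov_i) \setminus F$ momentarily exceeds $f$, so I cannot argue from $\delta(Cov_i(t+1)) \setminus F$ in isolation. This is precisely why $I(t)$ maintains an explicit covered witness $b_s$ for every server of $Q_i(t)$. The only model-level fact the argument relies on is that a triggered low-level write stays pending, and therefore keeps its register covered, until it is permitted to take effect; this is guaranteed here because $Ad_i$ protects exactly the registers in $\delta^{-1}(Q_i(t))$, which by the invariant include all the witnesses.
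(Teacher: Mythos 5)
Your proof is correct. Note that the paper itself offers no proof of this observation at all---it is asserted immediately after the definition of $Ad_i$ as self-evident---so there is no argument of the authors' to compare yours against; what you have done is supply the missing justification. Your argument is the natural one, and your strengthened invariant $I(t)$ is exactly the right device: the one genuinely delicate point is the frozen branch $Q_i(t)=Q_i(t-1)$, under which membership in $Q_i$ is not directly certified by $\delta(Cov_i(t))\setminus F$ at the current time, and carrying an explicit covered witness $b_s\in Cov_i(t)$ with $\delta(b_s)=s$ for each $s\in Q_i(t)$ resolves it cleanly. The chain of implications is sound: clause~2 of $Ad_i$ protects every covering write on $\delta^{-1}(Q_i(t))$, a prevented write cannot respond and so stays pending, hence $b_s$ stays covered at $t+1$ and $b_s\in Cov_i(t+1)$; combined with the (correctly propagated) disjointness $Q_i(t)\cap F=\emptyset$ this gives $Q_i(t)\subseteq\delta(Cov_i(t+1))\setminus F$, after which the case split on the definition of $Q_i(t+1)$ closes the induction and re-establishes $I(t+1)$ in both branches. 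The only external fact you invoke---that a triggered low-level write remains pending (and its register covered) until the environment allows it to take effect---is precisely what the paper's model stipulates. No gaps.
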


We first show that $r_{i-1}$ can be extended with a complete
high-level write $W_i$ by a new client $c_i$ such that the environment
behaves like $Ad_i$ until $W_i$ returns. Intuitively, this means that
$Ad_i$ delays applying low-level writes triggered by $c_i$ on at most
$f$ servers as well as the past covering writes. As a result $c_i$
cannot distinguish this scenario from the one where all the involved
servers and clients have crashed, and therefore, by solo-termination,
must return without before receiving the delayed replies.

\begin{lemma}
  Suppose that the environment behaves like $Ad_i$, and let $W_i$ be a
  high-level write invocation by client $c_i \not\in C(t_{i-1})$. Then,
  there exists time $t_r > t_{i-1}$ at which $W_i$ returns while the
  environment continues to behave like $Ad_i$ until $t_r$.
  \label{lem:1write}
\end{lemma}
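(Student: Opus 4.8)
The plan is to force $W_i$ to return by a solo-termination argument, exploiting the fact that, to $c_i$, the behaviour dictated by $Ad_i$ is indistinguishable from a fair run in which the (at most $f$) servers it ends up covering, together with the already-completed clients, have crashed. First I would extend $r_{i-1}$ by scheduling the invocation of $W_i$ and then letting $c_i$ take steps in isolation (no other client is ever scheduled), with the environment obeying $Ad_i$. By Observation~\ref{obs:ad} the sets $Q_i(t)$ form a non-decreasing chain, and by definition they are bounded by $f$, so they stabilise at a set $Q_i^{\ast}$ with $|Q_i^{\ast}|\le f$. This $Q_i^{\ast}$, together with the clients in $C(t_{i-1})$, is precisely the collection of components whose low-level replies $Ad_i$ withholds from $c_i$: the delayed writes on $\delta^{-1}(Q_i^{\ast})$ and the past covering writes, respectively.

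Next I would introduce a comparison run $\hat{r}$ built from the same prefix up to $t_{i-1}$, but in which all servers in $Q_i^{\ast}$ and all clients in $C(t_{i-1})$ are crashed at time $t_{i-1}$, after which $c_i$ runs solo under a fair scheduler. Since $|Q_i^{\ast}|\le f$, at most $f$ servers crash in $\hat{r}$, so the algorithm's correctness guarantees remain in force; and because $c_i$ runs in isolation in a fair run, solo-termination forces $W_i$ to return at some time $t_r$. The remaining task is to transfer this return to the $Ad_i$-run, i.e.\ to show that $c_i$'s sequence of low-level responses is identical in the two runs up to $t_r$. On every correct server (those outside $Q_i^{\ast}$, which in particular includes all of $F$) each low-level operation responds identically in both runs, while on $Q_i^{\ast}$ the delayed writes, and likewise the past covering writes, never respond in either run.

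The one genuine gap, and the step I expect to be the main obstacle, is that $Ad_i$ constrains only \emph{writes}: the base registers in $\delta^{-1}(Q_i^{\ast})$ are still live in the $Ad_i$-run and could answer \emph{reads} that $c_i$ issues on them, whereas in $\hat{r}$ those registers are dead. I would close this gap by observing that an $f$-tolerant solo-terminating algorithm must complete $W_i$ in $\hat{r}$ without ever awaiting a reply from the $\le f$ crashed servers; hence $c_i$ issues only finitely many reads on $Q_i^{\ast}$ before $t_r$, and I may schedule the $Ad_i$-run so as to defer exactly those reads until after $t_r$ (delivering them afterwards to preserve fairness, while keeping the covering writes delayed so that $Q_i^{\ast}$ remains live and covered in the constructed run). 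Deferring reads is consistent with the environment ``behaving like $Ad_i$'', since $Ad_i$ mandates only the withholding of the specified writes. Consequently $c_i$'s view coincides with its view in $\hat{r}$, and as $c_i$ is a deterministic state machine it performs the identical actions, so $W_i$ returns at $t_r$ with $Ad_i$ continuing to hold throughout $[t_{i-1},t_r]$.
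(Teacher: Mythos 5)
Your overall strategy is the same as the paper's: let $Q_i(t)$ stabilise at a set $Q_i^{\ast}$ of at most $f$ servers, build a fair comparison run in which those servers and the clients in $C(t_{i-1})$ have crashed, obtain the return of $W_i$ there from $f$-tolerance and solo-termination, and pull it back to the $Ad_i$-run by indistinguishability. You also correctly flag a subtlety the paper leaves implicit, namely that $Ad_i$ constrains only writes, so the schedule of the $Ad_i$-run must in addition be chosen to withhold the read responses that the crashed servers cannot supply in the comparison run.

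There is, however, a genuine gap in how you set up the comparison run: you crash all of $Q_i^{\ast}$ at time $t_{i-1}$, whereas a server $s$ only joins $Q_i$ at the later time $t_s$ at which $c_i$ first covers one of its registers. During the interval between $t_{i-1}$ and $t_s$ the server $s$ is not yet in $Q_i$, so in the $Ad_i$-run it is a fully live server whose registers may answer reads triggered by $c_i$; in your $\hat{r}$ it is already dead, and the two views of $c_i$ diverge. Your patch --- defer exactly the reads that $c_i$ issues on $\delta^{-1}(Q_i^{\ast})$ --- is circular: $Q_i^{\ast}$ is determined by which registers $c_i$ ends up covering, which depends on which responses it receives, which depends on which reads you chose to defer; since $c_i$ is a deterministic state machine, deferring a response can change its subsequent behaviour and hence change $Q_i^{\ast}$ itself, and no a priori fixed point is guaranteed. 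Nor can you defer reads on all servers outside $F$, since $c_i$ would then be awaiting more than $f$ servers and solo-termination would no longer force a return. The paper's proof avoids this by crashing each $s\in Q_i(t_f)$ adaptively, immediately after the step $t_s$ at which it enters $Q_i$: before $t_s$ the server behaves identically in both runs, so nothing needs to be suppressed retroactively, and after $t_s$ the decision to withhold its responses is made online, consistently with both $Ad_i$ and the crash. Replacing your static crash time $t_{i-1}$ by this adaptive one repairs the argument; the rest of your proof then goes through.
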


\begin{proof}
  By definition of $Ad_i$, there exists time $t_f > t_{i-1}$ such that
  for all times $t \ge t_f$, $Q_i(t) = Q_i(t_f)$. If $W_i$ returns
  before $t_f$, then $t_r = t_f$ satisfies the lemma. Otherwise, for
  each server $s\in Q_i(t_f)$, let $t_s$ be the earliest time such
  that $s\in Q_i(t_s)$. Since by Observation~\ref{obs:ad},
  $Q_i(t) \subseteq Q_i(t_f)$ for all $t \le t_f$, $s \in Q_i(t)$, for
  all $t \ge t_s$.

  Let $r'$ be a fair run, which includes the same sequence of steps as
  $r_{i-1}$ up to time $t_f$, and in addition, each server
  $s\in Q_i(t_f)$ fails immediately after the step $t_s$, and each
  client $c_1,\dots,c_{i-1}$ fails before any of its covering writes
  on registers in $Cov(t_{i-1})$ takes effect on the register
  states. Since $r'$ is fair, by $f$-tolerance and solo-termination,
  there exists time $t'$ at which $W_i$ returns in $r'$. Since
  $r_{i-1}$ is indistinguishable from $r'$ to $c_i$ for the entire
  duration of $W_i$, it must return in $r_{i-1}$ at time $t_r=t'$ as
  well.
\end{proof}

We next show that in order to guarantee correctness in the face of the
environment behaving like $Ad_i$, $W_i$ must trigger a low-level write
on at least one non-covered base register on each server in a set of
$2f+1$ servers.

\begin{lemma}
  Let $W_i$ be a high-level write invocation by client
  $c_i \not\in C(t_{i-1})$ that returns at time $t_r > t_{i-1}$, and
  suppose that the environment behaves like $Ad_i$ until $t_r$. Then,
  $|\delta(Tr_i(t_r) \setminus Cov(t_{i-1}))| > 2f$.
\label{lem:2f}
\end{lemma}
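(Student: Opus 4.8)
The plan is to argue by contradiction: assume $|\delta(Tr_i(t_r)\setminus Cov(t_{i-1}))|\le 2f$ and construct a run violating SW-safety. Write $D = Tr_i(t_r)\setminus Cov(t_{i-1})$ for the set of non-previously-covered base registers on which $W_i$ triggers a low-level write, and let $G=\delta(D)$, so $|G|\le 2f$. The guiding intuition is that $v_i$ can be \emph{reliably} recorded only on registers in $D$: on any register already in $Cov(t_{i-1})$, the pending old covering write (held back by $Ad_i$ but free to be applied after $t_r$) can overwrite whatever $W_i$ stores there, so such registers cannot help a subsequent reader tell $v_i$ apart from an older value.

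First I would build the run realizing $W_i$ so that, consistently with $Ad_i$, the environment \emph{greedily delays} $W_i$'s low-level writes: each time $W_i$ triggers a write on a register outside $Cov(t_{i-1})$ whose server lies outside $F$, the write is kept pending (so that register becomes covered and its server enters $Q_i$) until $|Q_i|$ reaches its cap $f$; every other low-level write of $W_i$ is allowed to take effect. By Lemma~\ref{lem:1write}, $W_i$ still returns at some $t_r$. This splits $G$ into the servers of $Q_i(t_r)$, on which $W_i$'s writes are held and which therefore still exhibit the register values they had at $t_{i-1}$, and the complement $E = G\setminus Q_i(t_r)$, on which $W_i$'s writes have taken effect and which exhibit $v_i$. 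Since $Q_i(t_r)\subseteq G$, the key counting step is that $|E|=|G|-|Q_i(t_r)|\le f$: if the greedy delay fills the cap then $|Q_i(t_r)|=f$ and $|E|\le 2f-f=f$; otherwise $|Q_i(t_r)|<f$, which under this schedule forces every server of $G$ outside $F$ to be held, so $E\subseteq F$ and again $|E|\le f$. Thus the at-most-$f$ servers carrying a visible copy of $v_i$ in a non-covered register fit within the crash budget.

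Finally I would derive the contradiction by an indistinguishability argument for the single reader $c_r$. Consider two continuations after $t_r$: run $\sigma$, in which all held-back old covering writes on $Cov(t_{i-1})$ are applied, the $\le f$ servers of $E$ crash, and then $c_r$ reads; and run $\sigma'$, identical except that $W_i$ never occurred. In $\sigma$ every accessible server (those in $\mathcal{S}\setminus E$) exhibits only old values---the servers of $Q_i(t_r)$ because $W_i$'s writes there are still pending, and every register of $Cov(t_{i-1})$ because its old covering write has been re-applied on top of anything $W_i$ wrote---so the base-object states seen by $c_r$ coincide exactly with those in $\sigma'$. By determinism $c_r$ returns the same value in both. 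Both runs are single-writers with $c_r$'s read following all writes, so SW-safety applied to $\sigma'$ forces that value to be the value written by the last write preceding the read, which is not $v_i$, whereas SW-safety applied to $\sigma$ forces it to be $v_i$---a contradiction. Hence $|\delta(Tr_i(t_r)\setminus Cov(t_{i-1}))|>2f$.

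The step I expect to be the main obstacle is the counting of $|E|$ together with the precise delaying schedule: I must ensure simultaneously that the schedule stays within $Ad_i$ (so Lemma~\ref{lem:1write} still guarantees $W_i$ returns), that exactly the servers exhibiting $v_i$ number at most $f$ (so they fit the crash budget), and that on every surviving server $W_i$ leaves no trace usable by $c_r$---both on the held servers of $Q_i(t_r)$ and, via the re-applied old covering writes, on $Cov(t_{i-1})$.
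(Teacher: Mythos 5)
Your proof follows essentially the same route as the paper's: assume $|\delta(Tr_i(t_r)\setminus Cov(t_{i-1}))|\le 2f$, split those servers into $Q_i(t_r)$ (whose pending low-level writes hide $v_i$) and the at-most-$f$ remainder (which is crashed after re-applying the old covering writes), and derive a contradiction from the single reader's inability to distinguish the resulting run from one in which $W_i$ never happened, so that SW-safety is violated. Your set $E$ is exactly the paper's $S_1\cup S_3$, and your explicit greedy delaying schedule merely makes precise the counting the paper compresses into ``by definition of $Q_i(t_r)$, and since $|F|=f$, $|S_1|+|S_3|\le f$.''
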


\begin{proof}
  Denote $M \triangleq \delta(Tr_i(t_r) \setminus Cov(t_{i-1}))$, and
  assume by contradiction that $|M| \leq 2f$.  Let $S_1=M \cap F$,
  $S_2=Q_i(t_r)$, and $S_3=M \setminus (S_1\cup S_2)$.  Note that
  $S_1,S_2,S_3$ are pairwise disjoint, $M = S_1 \cup S_2 \cup S_3$,
  and by definition of $Q_i(t_r)$, and since $|F|=f$,
  $|S_1 \cup S_3| = |S_1|+|S_3| \leq f$.

  Let $r$ be a run, which is identical to $r_{i-1}$ up to time
  $t_{i-1}$, after which all the covering writes in $r_{i-1}$ take
  effect on register states, and all servers in the set $S_1 \cup S_3$
  crash. Extend $r$ with an invocation of a high-level read operation
  $R$ by client $c_{rd}\neq c_i$. Since $r$ is fair, by
  solo-termination and $f$-tolerance, there exists time
  $t_{rd} > t_{i-1}$ at which $R$ returns. Since $r$ is
  single-writers, by SW-safety, $R$ must return $v_{i-1}$.

  Let $r'$ be a run, which is identical to $r_{i-1}$ up to time $t_r$,
  after which it is extended to time $t' > t_r$ by having all servers
  in the set $S_1 \cup S_3$ crash, and the covering writes in
  $r_{i-1}$ to take effect on the base register states. As a result,
  the values stored in the registers in $Cov(t_{i-1})$ are now
  identical to those in $r$.  Furthermore, since $Ad_i$ prevents all
  low-level writes triggered on registers in $\delta^{-1}(S_2)$ from
  taking effect before $t_r$, their values are also the same as those
  in $r$.  Thus, at $t'$, all registers in both $r$ and $r'$ have the
  same content.


  We extend $r'$ by having client $c_{rd} \neq c_i$ to invoke
  high-level read $R$ while allowing the environment to continue
  preventing all covering writes by client $c_i$ on the registers in
  $\delta^{-1}(S_2)$ from taking effect on their states. Since $r'$ is
  indistinguishable from $r$ to $c_{rd}$, the sequence of steps
  executed by $c_{rd}$ in $r'$ is the same as that in $r$.  Hence, $R$
  returns $v_{i-1}$ in $r'$. However, since $W_i$ is the last complete
  write preceding $R$ in $r'$, by SW-safety, the $R$'s return value
  must be $v_i \neq v_{i-1}$. A contradiction.

\end{proof}

\noindent
The following two corollaries follow immediately from
Lemmas~\ref{lem:1write} and \ref{lem:2f}:

\begin{cor}
  Let $W_i$ be a high-level write invocation by client
  $c_i \not\in C(t_{i-1})$ that returns at time $t_r > t_{i-1}$, and
  suppose that the environment behaves like $Ad_i$ until $t_r$. Then,
  $Q_i(t_r) = f$.
\label{cor:Q}
\end{cor}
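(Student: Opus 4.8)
The plan is to sandwich $|Q_i(t_r)|$ between $f$ and $f$. The upper bound $|Q_i(t_r)| \le f$ is immediate from the definition of $Q_i$: the set is either taken to be $\delta(Cov_i(t_r)) \setminus F$ precisely when that set has size at most $f$, or else frozen at the previous value, which is at most $f$ by induction down to $Q_i(t_{i-1}) = \emptyset$ (since $Cov_i(t_{i-1}) = \emptyset$). So the real content is the matching lower bound $|Q_i(t_r)| \ge f$, and the two inputs I would feed into it are Lemma~\ref{lem:1write}, guaranteeing that $W_i$ returns at some $t_r > t_{i-1}$ while the environment behaves like $Ad_i$, and Lemma~\ref{lem:2f}, guaranteeing $|\delta(Tr_i(t_r) \setminus Cov(t_{i-1}))| > 2f$. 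Since $|F| = f$, the latter yields $|\delta(Tr_i(t_r) \setminus Cov(t_{i-1})) \setminus F| > f$; that is, $W_i$ triggers writes on registers uncovered at $t_{i-1}$ that are spread over strictly more than $f$ servers outside $F$.

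For the lower bound I would argue by contradiction, assuming $|Q_i(t_r)| < f$, and track $g(t) = |\delta(Cov_i(t)) \setminus F|$. The key observation is that $g$ can increase by at most one per step, because each low-level trigger or response covers or uncovers a single register on a single server, so $\delta(Cov_i(t))$ changes by at most one server per action. Hence $g$ cannot pass from a value below $f$ to one above $f$ without attaining $f$ exactly; and the first time $g$ reaches $f$, the first case of the definition applies and sets $Q_i$ to a set of size exactly $f$, which by Observation~\ref{obs:ad} persists through $t_r$. Consequently $|Q_i(t_r)| < f$ forces $g(t) < f$ for every $t \le t_r$, so the first case $Q_i(t) = \delta(Cov_i(t)) \setminus F$ is in force at all times up to $t_r$.

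Closing the contradiction is then a short counting step. Fix any server $s$ in $\delta(Tr_i(t_r) \setminus Cov(t_{i-1})) \setminus F$; by the consequence of Lemma~\ref{lem:2f} there are more than $f$ such servers. By definition of $Tr_i$, some register $b$ on $s$ with $b \notin Cov(t_{i-1})$ has a write triggered at some $t \le t_r$, and at that instant $b$ is pending, hence $b \in Cov_i(t)$ and $s \in \delta(Cov_i(t)) \setminus F = Q_i(t)$; since $Ad_i$ then delays $c_i$'s covering write on $s$, Observation~\ref{obs:ad} gives $s \in Q_i(t_r)$. Thus every one of the more-than-$f$ servers of $\delta(Tr_i(t_r) \setminus Cov(t_{i-1})) \setminus F$ lies in $Q_i(t_r)$, so $|Q_i(t_r)| > f$, contradicting the definitional upper bound; therefore $|Q_i(t_r)| = f$. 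The step I expect to require the most care is the bookkeeping behind ``$g$ rises by at most one per step'' and ``a server placed in $Q_i$ while $g < f$ is never removed,'' since both rest on reading the interplay of the definition of $Q_i(t)$, the set $Cov_i(t)$, and the delaying rule of $Ad_i$ at the granularity of individual actions, together with the monotonicity supplied by Observation~\ref{obs:ad}; once that is pinned down, the cardinality argument closes at once.
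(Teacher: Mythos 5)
Your proof is correct and follows exactly the route the paper intends: the paper states that Corollary~\ref{cor:Q} ``follows immediately'' from Lemmas~\ref{lem:1write} and~\ref{lem:2f} without giving details, and your argument supplies precisely those details --- the definitional upper bound $|Q_i(t_r)|\le f$, and the lower bound obtained from $|\delta(Tr_i(t_r)\setminus Cov(t_{i-1}))\setminus F|>f$ together with the one-server-per-step growth of $\delta(Cov_i(t))$ and the monotonicity of $Q_i$ from Observation~\ref{obs:ad}. Nothing in your bookkeeping is off; this is a faithful elaboration rather than a different proof.
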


\begin{cor}
  For all $i > 0$, $|{\cal S} \setminus \delta(Cov(t_{i-1}))| > 2f$.
\label{cor:2f}
\end{cor}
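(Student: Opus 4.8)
The plan is to run the fresh write $W_i$ to completion against the adversary and then count the servers on which it is forced to deposit previously-uncovered values. Concretely, I would set the stage exactly as in the two preceding lemmas: let the environment behave like $Ad_i$ and let $c_i \notin C(t_{i-1})$ invoke a high-level write $W_i$. By Lemma~\ref{lem:1write} there is a time $t_r > t_{i-1}$ at which $W_i$ returns while $Ad_i$ remains in force throughout, which is precisely the hypothesis of Lemma~\ref{lem:2f}. Applying the latter gives $|\delta(Tr_i(t_r)\setminus Cov(t_{i-1}))| > 2f$; write $M$ for this set of servers.

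The corollary then reduces to the inclusion $M \subseteq {\cal S}\setminus\delta(Cov(t_{i-1}))$, since this immediately yields $|{\cal S}\setminus\delta(Cov(t_{i-1}))| \ge |M| > 2f$. The intuition behind the inclusion is that every register counted by $M$ survives the subtraction of $Cov(t_{i-1})$: it carries a low-level write triggered by $W_i$ in $(t_{i-1},t_r]$ but none of the old covering writes, so its host server should be regarded as \emph{fresh} relative to the already-covered set.

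The main obstacle is exactly this last inclusion. Because a single server may host several base registers, a server in $M$ could in principle also host a register of $Cov(t_{i-1})$, so that $M$ and $\delta(Cov(t_{i-1}))$ overlap and the crude bound $|{\cal S}\setminus\delta(Cov(t_{i-1}))|\ge|M|$ fails; the inclusion holds only when $Cov(t_{i-1})$ is $\delta$-saturated (a server with one covered register has all of its registers covered). I expect to close this gap by leaning on the adversary: under $Ad_i$ all covering writes of the clients in $C(t_{i-1})$ — in particular those covering the registers of $Cov(t_{i-1})$ — are held back, and once released they overwrite anything $W_i$ deposited on those very covered registers, so $W_i$ cannot certify its value on a server of $\delta(Cov(t_{i-1}))$ through the covered register itself. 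The clean way to formalise this is to strengthen the inductive invariant of Lemma~\ref{lem:exhaustive-run} so that the $f$ servers newly covered in each round (those delivered as $Q_i$ by Corollary~\ref{cor:Q}) are kept disjoint from $F$ and from all previously covered servers, making $\delta(Cov(t_{i-1}))$ a union of full fibres; the inclusion $M\subseteq{\cal S}\setminus\delta(Cov(t_{i-1}))$ then follows and the count goes through. Establishing this disjointness cleanly is where the real work lies.
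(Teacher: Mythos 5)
Your setup is exactly the paper's: extend $r_{i-1}$ under $Ad_i$, obtain the return time $t_r$ of $W_i$ from Lemma~\ref{lem:1write}, and apply Lemma~\ref{lem:2f} to get $|M|>2f$ with $M=\delta(Tr_i(t_r)\setminus Cov(t_{i-1}))$. You are also right that, read literally, $|{\cal S}\setminus\delta(Cov(t_{i-1}))|>2f$ does not follow from $|M|>2f$ via the inclusion $M\subseteq{\cal S}\setminus\delta(Cov(t_{i-1}))$, because a server may host both a covered and an uncovered register. The problem is your proposed repair. Making the per-round server sets $Q_j$ pairwise disjoint (even if it could be arranged) does not make $\delta(Cov(t_{i-1}))$ a union of full fibres: a server in $Q_j$ has only those of its registers covered on which $c_j$ happened to trigger held-back writes, so $W_i$ may still trigger a write on a fresh register of that server and contribute it to $M$. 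Nor can the adversary force saturation: a register becomes covered only if the algorithm triggers a write on it, and the adversary controls scheduling and failures, not which registers clients access. Indeed, for a Disk-Paxos-style algorithm with one dedicated register per client per server on $2f+1$ servers, already $|{\cal S}\setminus\delta(Cov(t_1))|=f+1\le 2f$, so the saturation/disjointness you hope for is unobtainable and the ``real work'' you defer cannot be carried out.

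The reason the paper treats this as immediate is that the property it actually needs (and the one invoked in the proofs of Theorems~\ref{thm:2f+1} and~\ref{thm:servers}) is that more than $2f$ servers each host \emph{at least one} register with no covering write at $t_{i-1}$, i.e., $|\delta({\cal B}\setminus Cov(t_{i-1}))|>2f$. That genuinely is a one-liner: every register in $Tr_i(t_r)\setminus Cov(t_{i-1})$ lies in ${\cal B}\setminus Cov(t_{i-1})$, so $M\subseteq\delta({\cal B}\setminus Cov(t_{i-1}))$ and Lemma~\ref{lem:2f} gives the bound. (For the instance $i=1$ used in Theorem~\ref{thm:2f+1} the two readings coincide, since $Cov(t_0)=\emptyset$.) So you correctly diagnosed a mismatch between the corollary's wording and what Lemma~\ref{lem:2f} delivers, but the fix is to read the conclusion at the register level rather than to strengthen the covering construction of Lemma~\ref{lem:exhaustive-run}; the strengthening you propose is both unnecessary for the corollary's downstream uses and not achievable.
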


\noindent
We are now ready to prove Lemma~\ref{lem:exhaustive-run}:

\begin{proof}[of Lemma~\ref{lem:exhaustive-run}]
  By Lemma~\ref{lem:1write}, $r_{i-1}$ can be extended with a complete
  high-level write $W_i$ by client $c_i \neq c_{i-1}$ writing a value
  $v_i \neq v_{i-1}$ while allowing the environment to behave like
  $Ad_i$ until time $t_r$ when $W_i$ returns. We further extend
  $r_{i-1}$ by allowing the environment to behave like $Ad_i$ until
  time $t' > t_r$ when all writes triggered after $t_{i-1}$ on the
  registers in $\delta^{-1}(F)$ take effect. Hence,
  $F \cap \delta(Cov_i(t')) = \emptyset$.


  Since by Corollary \ref{cor:Q}, $Q_i(t_r)=f$, and by
  Observation~\ref{obs:ad}, $Q_i(t_r) \subseteq Q_i(t')$, $Q_i(t')=f$,
  and therefore, $|Cov_i(t')| \geq f$. Now since $Cov_i(t')$ and
  $Cov(t_{i-1})$ are disjoint,
  $Cov(t') = Cov(t_{i-1}) \cup Cov_i(t')$, and by the induction
  hypothesis $|Cov(t_{i-1})| \ge (i-1)f$, and
  $\delta(Cov(t_{i-1})) \cap F = \emptyset$, we receive
  $|Cov(t')| \ge (i-1)f + f = if$, and
  $\delta(Cov(t')) \cap F = (\delta(Cov(t_{i-1})) \cap F) \cup
  (\delta(Cov_i(t')) \cap F) = \emptyset$.
  Thus, $t_i = t'$ satisfies the lemma.
\end{proof}

\noindent
\textbf{Resource Complexity} The following theorem follows immediately
from Lemma~\ref{lem:exhaustive-run} (please see
Section~\ref{sec:space:app} of the Appendix for a full proof):

\begin{thm}
  For any $k\ge 0$, $f\ge 0$, there is no $f$-tolerant algorithm
  emulating an SW-safe solo-terminating MWSR register for $k$ clients
  using less than $kf$ base registers.
\label{thm:kf}
\end{thm}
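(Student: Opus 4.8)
The plan is to derive this theorem as an almost immediate corollary of Lemma~\ref{lem:exhaustive-run}, so the argument is mostly bookkeeping rather than a fresh construction. First I would dispose of the degenerate cases: if $k=0$ or $f=0$ then $kf=0$, and since no algorithm can consume a negative number of base registers, the claim holds vacuously. So I would assume $k,f\ge 1$ and argue by contradiction, supposing that some $f$-tolerant algorithm $A$ emulates an SW-safe solo-terminating MWSR register for $k$ clients with resource complexity strictly less than $kf$.

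Next I would instantiate Lemma~\ref{lem:exhaustive-run} at $i=k$. Fixing an arbitrary set $F\subseteq{\cal S}$ with $|F|=f$ (its identity is irrelevant for this argument), the lemma yields a write-only sequential run $r_k$ of $A$ comprising $k$ complete high-level writes by the $k$ distinct clients $c_1,\dots,c_k$ and ending at some time $t_k$ with $|Cov(t_k)|\ge kf$. Since the emulation is for exactly $k$ clients, the use of $k$ distinct writers is consistent with the model, so $r_k$ is a legitimate run of $A$ to which the resource-complexity bound applies.

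Finally I would translate the covering bound into a resource-consumption bound. By definition of $Cov(t_k)$, every register in this set carries a pending (covering) low-level write triggered at or before $t_k$, so each such register is a distinct base object that has been used in $r_k$ and therefore counts toward the resource consumption of $A$ in $r_k$. Hence the resource consumption of $A$ in $r_k$ is at least $|Cov(t_k)|\ge kf$, and since resource complexity is defined as the maximum consumption over all runs, the resource complexity of $A$ is at least $kf$, contradicting the assumption and completing the proof.

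I do not expect a genuine obstacle here, since the heavy lifting is already done in Lemma~\ref{lem:exhaustive-run}; the only point that warrants explicit care is confirming that a covered register genuinely contributes to resource consumption, i.e., that having a pending low-level write on a base object qualifies that object as ``used'' in the run. This follows directly from the definition of resource consumption as the number of base objects used by $A$, together with the fact that covering writes on distinct base registers are counted as distinct objects.
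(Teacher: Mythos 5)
Your proposal is correct and follows exactly the paper's own argument: instantiate Lemma~\ref{lem:exhaustive-run} with $i=k$ to obtain a run with at least $kf$ distinct covered base registers, and conclude that the resource complexity is at least $kf$. The extra care you take with the degenerate cases $k=0$ or $f=0$ and with justifying that a covered register counts toward resource consumption is sound but not needed beyond what the paper already states.
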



\noindent
\textbf{Number of Servers} We now turn our attention to deriving the
number of servers required for supporting the emulation. The following
result follows immediately from Corollary~\ref{cor:2f} (please see
Section~\ref{sec:space:app} of the Appendix for a full proof), but can
also be derived from well-known results in the literature (e.g.,
~\cite{Attiya:1990:RAE:79147.79158,Bracha:1985:ACB:4221.214134})


\begin{thm}
  For any $k > 0$, and $f \ge 0$, there is no $f$-tolerant algorithm
  emulating an SW-safe solo-terminating MWSR register for $k$ clients
  with less than $2f+1$ servers.
\label{thm:2f+1}
\end{thm}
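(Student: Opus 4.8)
The plan is to obtain the bound directly from the covering machinery already developed, instantiated at its base case $i=1$. Since $k > 0$, there is at least one client available to act as the writer $c_1$, and the empty run $r_0$ (with $t_0 = 0$ steps) trivially satisfies Lemma~\ref{lem:exhaustive-run}: no low-level write has been triggered yet, so $C(t_0) = \emptyset$ and $Cov(t_0) = \emptyset$. I would then invoke Corollary~\ref{cor:2f} with $i = 1$, which yields $|{\cal S} \setminus \delta(Cov(t_0))| > 2f$. Because $\delta(Cov(t_0)) = \delta(\emptyset) = \emptyset$, this collapses to $|{\cal S}| > 2f$, i.e.\ $|{\cal S}| \geq 2f+1$, establishing that no $f$-tolerant SW-safe solo-terminating MWSR emulation for $k$ clients can use fewer than $2f+1$ servers.

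For a self-contained rendering that does not merely cite the corollary, I would replay the single-write argument underlying Lemmas~\ref{lem:1write} and~\ref{lem:2f} at $i=1$. Starting from $r_0$, let the environment behave like $Ad_1$, which delays the low-level writes of the lone writer $c_1$ on at most $f$ servers (there are no earlier covering writes to suppress, since $C(t_0)=\emptyset$). By Lemma~\ref{lem:1write}, solo-termination together with $f$-tolerance force the high-level write $W_1$ to return at some time $t_r$, because $c_1$ cannot distinguish this run from one in which the delayed servers and clients have crashed. Lemma~\ref{lem:2f} then gives $|\delta(Tr_1(t_r) \setminus Cov(t_0))| > 2f$; since $Cov(t_0)=\emptyset$ and $\delta(Tr_1(t_r)) \subseteq {\cal S}$, the set of servers touched by $W_1$ alone already exceeds $2f$, whence $|{\cal S}| \geq 2f+1$.

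The substantive content is inherited entirely from Lemma~\ref{lem:2f}, whose proof is the only real obstacle: it must exhibit two runs indistinguishable to a reader --- one in which an $f$-crash pattern together with the applied past covering writes lets the stale value $v_{i-1}$ (here the initial value $v_0$) resurface, and one in which $W_1$ has already completed --- so that a solo read $R$ by a reader $c_{rd}$ is forced to return the old value, contradicting SW-safety, which demands the freshly written $v_1 \neq v_0$. Given that this lemma is already available, the theorem presents no further difficulty; the only points requiring care are verifying that the base case genuinely has $Cov(t_0)=\emptyset$, and that the hypothesis $k>0$ supplies a writer $c_1 \notin C(t_0)$ whose single write suffices to expose $2f+1$ servers. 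I would also remark, as the authors do, that the same $2f+1$ bound is independently recoverable from classical results on fault-tolerant register emulation~\cite{Attiya:1990:RAE:79147.79158,Bracha:1985:ACB:4221.214134}, giving a useful sanity check.
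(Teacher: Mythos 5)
Your proposal is correct and follows essentially the same route as the paper: the paper's own proof also instantiates Corollary~\ref{cor:2f} at $i=1$, notes that $Cov(t_0)=\emptyset$ so the inequality collapses to $|{\cal S}|>2f$, and derives the contradiction with a $2f$-server algorithm. Your additional unpacking of the corollary via Lemma~\ref{lem:2f} at $i=1$ is just an inlining of the same argument, not a different approach.
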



Next, we show that if the storage per server is bounded by a known
constant, an extra $f+1$ servers beyond the minimum capacity
established by Theorem~\ref{thm:kf} are necessary to accommodate a
given number of clients.

\begin{thm}
  For any $m > 0$, $\ell > 0$, and $f\ge 0$, there is no $f$-tolerant
  algorithm emulating an SW-safe solo-terminating MWSR register for
  $k \ge \ell m$ clients using less than $\ell f + f + 1$ servers if
  each server can store at most $m$ registers.
\label{thm:servers}
\end{thm}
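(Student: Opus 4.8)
The plan is to reduce to the boundary case $k=\ell m$ and then combine two independent lower bounds on $|{\cal S}|$ — one forced by having to \emph{store} the covered registers under the per-server capacity limit $m$, and one forced by having to keep enough \emph{clean} servers available for a further write — both evaluated at the single instant just before the last write. First I would observe that it suffices to prove the bound for $k=\ell m$: any $f$-tolerant capacity-$m$ emulation for $k\ge\ell m$ clients is, in particular, such an emulation when only $\ell m$ of its clients ever take steps, so a lower bound for $\ell m$ clients transfers verbatim, and $k\ge\ell m$ guarantees that these clients exist.

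Next I would invoke Lemma~\ref{lem:exhaustive-run} (for an arbitrary $F$ with $|F|=f$) to obtain the write-only run $r_{\ell m-1}$ with $\ell m-1$ completed high-level writes and $|Cov(t_{\ell m-1})|\ge(\ell m-1)f$; since we are in the case $k=\ell m$ there are enough distinct clients to realise these writes and still leave one fresh client outside $C(t_{\ell m-1})$ for an $(\ell m)$-th write. I would then run the two counting steps. First, let $g=|\delta(Cov(t_{\ell m-1}))|$ be the number of servers carrying a covered register at $t_{\ell m-1}$; because each server holds at most $m$ registers we have $g\,m\ge|Cov(t_{\ell m-1})|\ge(\ell m-1)f$, whence $g\ge(\ell m-1)f/m=\ell f-f/m$. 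Second, applying Corollary~\ref{cor:2f} with $i=\ell m$ gives $|{\cal S}\setminus\delta(Cov(t_{\ell m-1}))|>2f$, i.e.\ at least $2f+1$ servers carry no covered register at $t_{\ell m-1}$.

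The two server sets $\delta(Cov(t_{\ell m-1}))$ and ${\cal S}\setminus\delta(Cov(t_{\ell m-1}))$ partition ${\cal S}$, so adding the two bounds gives $|{\cal S}|\ge(\ell f-f/m)+(2f+1)$. Since $f/m\le f$ for every $m\ge1$, this is at least $\ell f+f+1$, which is the claim. Note that the $F$-avoidance clause of Lemma~\ref{lem:exhaustive-run} is not actually needed here — only the covering count $|Cov(t_{\ell m-1})|\ge(\ell m-1)f$ is used — so the argument is self-contained once that lemma and Corollary~\ref{cor:2f} are in hand.

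The hard part is isolating the extra $+1$. The naive combination — "$\ell m f$ covered registers need $\ell f$ servers, and the $f$ servers of $F$ host none of them, so $|{\cal S}|\ge\ell f+f$" — stops exactly one server short. The device that overcomes this is to evaluate everything \emph{one write before the end}: at $t_{\ell m-1}$ one forfeits only $f/m\le f$ guaranteed occupied servers (the capacity bound drops from $\ell f$ to $\ell f-f/m$), but in exchange Corollary~\ref{cor:2f} supplies the full slack of $2f+1$ clean servers, and since $2f+1$ exceeds the forfeited $f/m$ by more than $f+1$, the trade is strictly favourable. The two points that need care are that $(\ell m-1)f/m\ge\ell f-f/m\ge\ell f-f$ genuinely holds for all $m\ge1$ (so no ceiling subtleties intervene), and that Corollary~\ref{cor:2f} is legitimately applicable at $i=\ell m$ — which is precisely why the reduction keeps $\ell m$ active clients, so that an $(\ell m)$-th write by a client outside $C(t_{\ell m-1})$ can still be invoked.
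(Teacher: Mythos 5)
Your proof is correct, and it takes a genuinely different route from the paper's. The paper argues by contradiction with a register-level count: assuming only $\ell f + f$ servers, it uses the $F$-avoidance clause of Lemma~\ref{lem:exhaustive-run} to pin the $N \le mf$ registers of $F$ among the at most $N+f$ uncovered registers, concludes that at most $2f$ servers can host an uncovered register, and contradicts Corollary~\ref{cor:2f}. You instead give a direct additive lower bound on $|{\cal S}|$: the capacity constraint forces at least $\lceil (\ell m-1)f/m\rceil \ge \ell f - f$ servers to carry a covered register at $t_{\ell m -1}$, Corollary~\ref{cor:2f} (at $i=\ell m$, which you correctly justify by reserving a fresh writer) forces at least $2f+1$ servers to carry none, and these two sets partition ${\cal S}$. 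Your version buys two things: it dispenses with the $F$-avoidance clause and the bookkeeping of $N$ and $R$ entirely, and --- because it only sums cardinalities of a partition by ``has a covered register / has none'' --- it sidesteps a small subtlety in the paper's contradiction step, which tacitly equates ``servers with no covered register'' with ``servers having a register without a covering write'' and thus implicitly assumes every clean server hosts at least one register. What the paper's version buys in exchange is an explicit picture of \emph{where} the few remaining uncovered registers can live (all but $f$ of them sit on $F$), which is closer in spirit to how the bound would be used to reason about storage reclamation. Your arithmetic checks out ($\ell f - f/m \ge \ell f - f$ for $m\ge 1$, and the strict inequality in Corollary~\ref{cor:2f} supplies the $+1$), and the reduction to $k=\ell m$ is legitimate.
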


\begin{proof}
  Assume by contradiction there exists an $f$-tolerant algorithm $A$
  emulating an SW-safe solo-terminating MWSR register for $k=\ell m$
  clients using $\ell f + f$ servers.  Fix a set $F \subseteq S$, such
  that $|F|=f$, and let $N \leq mf$ be the number of registers mapped
  to the servers in $F$.

  By Lemma~\ref{lem:exhaustive-run}, there exists a run $r_{k-1}$ of
  $A$ consisting of $k-1 = \ell m - 1$ high-level writes by $k-1$
  distinct clients such that by the end of $r_{k-1}$, the number of
  distinct base registers having a covering write is at least
  $(k-1)f$, and no registers in $\delta^{-1}(F)$ have a covering
  write.  Thus, the number of registers that remain not covered by the
  end of $r_{k-1}$ is at most
  $\ell fm + N - (k-1)f = \ell f m + N - \ell f m + f = N + f \triangleq
  R$.

  Now since no register in $\delta^{-1}(F)$ has a covering write, $N$
  out of total $R$ registers must be mapped to the $f$ servers in $F$.
  And since the remaining $f$ registers can be mapped to at most $f$
  servers, by the end of $r_{k-1}$, the total number of servers that
  may have a register without a covering write is at most $2f$. A
  contradiction to Corollary~\ref{cor:2f}.
\end{proof}

\noindent
\textbf{Adaptivity} We show that no SW-safe solo-terminating MWSR
register can have a fault-tolerant emulation adaptive to point
contention:

\begin{thm}
  For any $f>0$, there is no $f$-tolerant algorithm that emulates an
  SW-safe solo-terminating MWSR register with resource complexity
  adaptive to point contention.
\label{thm:adaptive}
\end{thm}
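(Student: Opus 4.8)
The plan is to derive the result as an essentially immediate consequence of Lemma~\ref{lem:exhaustive-run}. Suppose, toward a contradiction, that some $f$-tolerant SW-safe solo-terminating MWSR emulation $A$ has resource complexity adaptive to point contention, witnessed by a bounding function $M$. Since $A$ is $f$-tolerant it is deployed over a server set $\mathcal{S}$ with $|\mathcal{S}| \ge 2f+1$ by Theorem~\ref{thm:2f+1}, so I may fix an arbitrary $F \subseteq \mathcal{S}$ with $|F| = f$ and apply the lemma to $A$ and this $F$. This yields, for every $i \ge 0$, a write-only sequential run $r_i$ consisting of $i$ complete high-level writes by $i$ distinct clients, with $|Cov(t_i)| \ge if$. (Here I rely on the client set $\mathcal{C}$ being unbounded, so that $i$ distinct writers are available for each $i$.)

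The first key step is to bound the point contention of these runs. Because each $r_i$ is \emph{sequential}, no two high-level operations overlap, so after any finite prefix of $r_i$ at most one client has an incomplete high-level invocation; hence $\PntCont(r_i) \le 1$. The second step is to lower-bound the resource consumption of $A$ in $r_i$: every register in $Cov(t_i)$ carries a covering (pending) low-level write that was triggered by the algorithm during $r_i$, so each such register is a base object used by $A$ in $r_i$. Since $|Cov(t_i)| \ge if$, the resource consumption of $A$ in $r_i$ is at least $if$.

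Combining the two bounds gives the contradiction. By adaptivity, the resource consumption of $A$ in $r_i$ is at most $M(\PntCont(r_i)) \le M(1)$, a finite constant independent of $i$. But that consumption is at least $if$, which, since $f > 0$, strictly exceeds $M(1)$ as soon as $i > M(1)/f$. Choosing any such $i$ contradicts the assumed adaptivity, establishing the theorem.

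I do not expect a genuine obstacle here, since all the heavy lifting is already packaged in Lemma~\ref{lem:exhaustive-run}. The only points requiring care are conceptual rather than technical: recognizing that the lemma delivers a \emph{sequential} run, so that its point contention is pinned at $1$ while its storage footprint grows linearly in $i$; and confirming that covered registers genuinely count toward resource consumption (they are base objects on which the algorithm has triggered a low-level write). Once these observations are in place, the separation between constant point contention and unbounded storage use is immediate.
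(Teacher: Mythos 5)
Your proposal is correct and follows essentially the same route as the paper's own proof: both invoke Lemma~\ref{lem:exhaustive-run} to obtain a sequential run whose point contention stays at $1$ while the number of covered (hence used) base registers grows by $f$ per write, which defeats any bounding function $M$. Your write-up is merely more explicit about the details the paper leaves implicit (the unbounded client set, and that covered registers count toward resource consumption).
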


\begin{proof}
  Pick an arbitrary $f>0$, and assume by contradiction that such an
  algorithm $A$ exists. By Lemma~\ref{lem:exhaustive-run}, there
  exists a run $r$ of $A$ consisting of $k$ high-level writes by $k$
  distinct clients such that the resource complexity grows by $f$ for
  each consecutive write that completes in $r$ whereas the point
  contention remains equal $1$ for the entire $r$. We conclude that no
  function mapping point contention to resource consumption can exist,
  and therefore, $A$'s resource complexity is not adaptive to point
  contention. A contradiction.
\end{proof}


\section{Atomic Register Implementation}
\label{sec:rwalg}

In this section we present a space-efficient $f$-tolerant algorithm
implementing a wait-free MWMR atomic register from a collection of
$n>2f$ servers each storing a single \CAS\/ object. Unlike previous
space-efficient approaches our algorithm does not require support for
any specialized read-modify-write functionality besides $\CAS$, i.e.,
conditional write, obviating the need for a custom server code.
The algorithm's time complexity is {\em adaptive}\/ to concurrency
guaranteeing that each operation $op$ terminates in at most $O(c^2)$
steps where $c=\PntCont(op)$.

Our algorithm, called {\em CAS-ABD}, is derived from the multi-writer
ABD~\cite{ABD95} emulation of an atomic read/write register to which
we refer as {\em MW-ABD}. For completeness, the MW-ABD implementation
is briefly reviewed in Section~\ref{sec:mw-abd} below (full details
can be found in~\cite{rambo}). The CAS-ABD algorithm is described in
Section~\ref{sec:cas-abd}.

\subsection{MW-ABD Algorithm}
\label{sec:mw-abd}

The MW-ABD shared state consists of a set ${\cal B}$ of $n > 2f$
crash-prone objects $\{b_1,\dots,b_n\}$ mapped to a set ${\cal S}$ of
$n$ servers ${\cal S} = \{s_1,\dots,s_n\}$ such that
$\delta(b_i) = s_i$ for each $1 \le i \le n$. Each object $b_i$ stores
a pair $(ts, val)$ where $ts$ is a timestamp and $val \in \mathbb{V}$.
We will write $b_i.ts$ and $b_i.val$ to refer to the timestamp and
value components of $b_i$ respectively. Each timestamp $ts$ is a pair
$(num, c)$ where $num\in \mathbb{N}$ is a natural number, and
$c\in {\cal C}$ is a client. We will write $ts.num$ and $ts.c$ to
refer to the $ts$'s first and second component respectively. The
timestamps are ordered lexicographically so that $ts < ts'$ if
$ts.num < ts'.num$, or $ts.num = ts'.num$ and $ts.c < ts.c'$.  The
MW-ABD types and shared states are summarized in
Algorithm~\ref{alg:ABD-states}.

\begin{algorithm}[H]
 \caption{Types and States of MW-ABD and CAS-ABD}
 \label{alg:ABD-states}
\begin{algorithmic}[1]
  \footnotesize %
  \State $TS = \mathbb{N} \times {\cal C}$, the set of timestapms with
  selectors $num$ and $c$%
  \State $TSVal = TS \times \mathbb{V}$, with selectors $ts$ and
  $val$ %
  \State ${\cal B} = \{b_1,\dots,b_n\}$: the set of shared objects
  such that $b_i \in TSVal$ for all $1 \le i \le n$; initially
  $b_i = ((0, 0), v_0)$
\end{algorithmic}
\end{algorithm}

The sequential specification supported by each object
$b_i\in {\cal B}$ is shown in Algorithm~\ref{alg:updateABD}.  It
consists of two atomic operations: $read$ and $update$. The read
operation returns the current content of $b_i$ (i.e.,
$(b_i.ts, b_i.val)$); and the $update$ operation is a
read-modify-write (RMW) primitive comprised of atomically executed
sequence of steps shown in
lines~\ref{line:update-start}--\ref{line:update-end} of
Algorithm~\ref{alg:updateABD}. We henceforth refer to the object type
supporting the sequential specification in
Algorithm~\ref{alg:updateABD} as {\em ABD Object (ABDO)}.

\begin{algorithm}[H]
  \caption{The ABDO sequential specification for each $b_i$,
    $1 \le i\le n$}
 \label{alg:updateABD}
\footnotesize
\begin{algorithmic}[1]
\begin{multicols}{2}
\Operation{$update(b_i, t, v)$}{} 
\State {\bf if} $b_i.ts < t$ \label{line:update-start}
\State \hspace{0.4cm} $b_i.ts \gets t$
\State \hspace{0.4cm} $b_i.val \gets v$
\State {\bf return} ack \label{line:update-end}
\EndOperation

\columnbreak

\Operation{$read(b_i)$}{}
\State \textbf{return} $(b_i.ts, b_i.val)$
\EndOperation
\end{multicols}
\end{algorithmic}
\end{algorithm}

The implementation of both write and read proceeds by invoking
consecutive {\em rounds}\/ of base object accesses. At each round, the
client triggers operations on all base objects in parallel, and awaits
responses from at least $n-f$ objects. The write implementation
consists of two rounds. In the first round, the writer collects the
set $R$ of $(b_i.ts,b_i.val)$ pairs from $n-f$ objects by triggering
$b_i.read$ on all objects $b_i\in {\cal B}$. The writer then
determines a new timestamp $ts'$ to be stored alongside the value $v$
being written so that
$ts'.num = \max \{num' : (num',\ast) \in R \} + 1$ and $ts'.c$ is the
writer's identifier. This is followed by another round where the
writer triggers $b_i.update(b_i, ts, v)$ on each base object $b_i$ to
replace its current content with $(ts, v)$.

The first round of read is identical to that of write except that the
set $R$ is used to identify the value $v' \in \mathbb{V}$ having the
highest timestamp $ts'$ among the timestamp/value pairs in $R$. This
is followed by another round where the reader invokes
$b_i.update(b_i, ts', v')$ on each base object $b_i$ to ensure
$(ts',v')$ is available from all sets of $n-f$ base objects.  The
reader then returns $v'$.

\subsection{CAS-ABD Algorithm}
\label{sec:cas-abd}

Suppose that the base ABD objects in ${\cal B}$ are substituted with
{\em Compare-and-Swap (CAS)}\/ objects: i.e., the sequential
specification of each $b_i \in {\cal B}$ consists of a single \CAS\/
primitive whose code is shown in
lines~\ref{line:cas-start}--\ref{line:cas-end} of
Algorithm~\ref{alg:cas-abd}. We obtain an implementation of an
$f$-tolerant MWMR atomic read/write register from a collection of
$n>2f$ \CAS\/ base objects, to which we refer as {\em CAS-ABD}, in a
modular fashion by first constructing an ABDO from a {\em single}\/
\CAS\/ base object $b_i$ using the emulation algorithm in
Algorithm~\ref{alg:cas-abd}, and then, plugging the resulting
construction into the MW-ABD algorithm described above.

\begin{algorithm}[H]
  \caption{The ABDO emulation from a single \CAS\/ object $b_i$,
    $1\le i\le n$}
 \label{alg:cas-abd}
\footnotesize

\begin{algorithmic}[1]
\begin{multicols}{2}
\Statex Local variables: 
\Statex \hspace*{0.4cm} $exp \in TSVal$, initially $((0, 0), v_0)$

\Operation{$update(b_i, t, v)$}{}
\State $done \gets false$
\State \textbf{if} $t > exp.ts$  \label{line:test}
\State \hspace*{0.4cm} {\textbf{repeat}} \label{line:loop-start}%
\State \hspace*{0.4cm} \hspace*{0.4cm} $old \gets \CAS(b_i, exp, (t, v))$ \label{line:cas} %
\State \hspace*{0.4cm} \hspace*{0.4cm} \textbf{if} $old = exp \vee old.ts \ge t$ \label{line:aftercas}%
\State \hspace*{0.4cm} \hspace*{0.4cm} \hspace*{0.4cm} $done \gets true$ %
\State \hspace*{0.4cm} \hspace*{0.4cm} $exp \gets old$ \label{line:exp-old} %
\State \hspace*{0.4cm} \textbf{until} $done \gets true$ \label{line:loop-end} %
\State \textbf{return} ack
\EndOperation

\columnbreak

\Operation{$read(b_i)$}{}
\State \textbf{return} $\CAS(b_i, exp, exp)$ \label{line:read-cas}
\EndOperation

\vfill

\Operation{$\CAS(b_i, exp, new)$, $exp, new \in TSVal$}{} \label{line:cas-start}
\State $prev \gets b_i$
\State \textbf{if} $exp = b_i$
\State \hspace*{0.4cm} $b_i \gets new$
\State {\bf return} $prev$ \label{line:cas-end}
\EndOperation
\end{multicols}
\end{algorithmic}
\end{algorithm}

In order to prove that CAS-ABD is a correct implementation of an
$f$-tolerant wait-free MWMR read/write register, it suffices to show
that the ABDO emulation in Algorithm~\ref{alg:cas-abd} is a wait-free
linearizable implementation of the ABD object. Below we show that this
is indeed the case assuming that the following property, to which we
henceforth refer as {\em timestamp uniqueness}, is satisfied in all
runs $r$ of ABDO: for all objects $b_i \in {\cal B}$, $r$ includes at
{\em most}\/ one invocation of the form $update(b_i, ts, \ast)$. Given
that linearizability is a composable property~\cite{herlihy}, and
MW-ABD is known to satisfy timestamp-uniqueness in all runs, the
correctness of CAS-ABD then follows from the correctness of
MW-ABD~\cite{rambo}.

To show linearizability~\cite{herlihy}, we first identify for each
invocation of $update$ and $read$ in each possible run of the ABDO
emulation, a single step within the operation execution, called a {\em
  linearization point}\/ (i.e., a single step where the operation
takes effect on the base object state), as follows: For each $read$
invocation, the linearization point is simply the return step in
line~\ref{line:read-cas}. The linearization points for the $update$
invocations are assigned to either one of the following two steps: (1)
if $update$ returns without entering the loop in
lines~\ref{line:loop-start}--\ref{line:loop-end}, the condition test
step in line~\ref{line:test} is the linearization point; and (2) if
$update$ returns due to the condition in line~\ref{line:aftercas}
being true, then the \CAS\/ call in line~\ref{line:cas} is the
linearization point. The linearizability then follows from following
lemma (proven in Section~\ref{sec:rwcorrectness} of the Appendix),
which asserts that the sequence obtained by shrinking each operation
to occur atomically at its linearization point is a valid sequential
run of ABDO.

\begin{lemma}
  Let $r$ be a run of the ABDO emulation in
  Algorithm~\ref{alg:cas-abd}, and $\sigma$ be a sequential run
  obtained from $r$ by shrinking each $update$ and $read$ operation to
  occur at its linearization point. Then, $\sigma$ is a sequential run
  of the ABD object in Algorithm~\ref{alg:updateABD}.
\label{lem:linear}
\end{lemma}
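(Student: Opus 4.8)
The plan is to prove the lemma for the single ABDO built from one \CAS\/ object $b_i$, so all operations in $r$ act on $b_i$. Since each operation is contracted to a single step, the linearization points induce a total order on the operations of $r$; let $op_1, op_2, \dots$ be this order, which is exactly the order of operations in $\sigma$. I would first record two structural facts about the physical evolution of $b_i$ in $r$. First, the shared state of $b_i$ is modified only by a successful \CAS\/ (one whose guard $exp = b_i$ in lines~\ref{line:cas-start}--\ref{line:cas-end} holds); by inspection every successful \CAS\/ is the \CAS\/ of line~\ref{line:cas} in the \emph{final} iteration of some $update$ loop (case~(2a)), while the $read$ \CAS\/ of line~\ref{line:read-cas} never changes the state since it writes back $exp$. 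Hence every physical state change coincides with the linearization point of a case-(2a) $update$. Second, I would establish a monotonicity sub-lemma: the timestamp component of $b_i$ is non-decreasing in real time and strictly increases at each state change. This rests on the loop invariant that $exp.ts < t$ holds at the start of every iteration of the loop in lines~\ref{line:loop-start}--\ref{line:loop-end} (it holds on entry by the test in line~\ref{line:test}, and is preserved by line~\ref{line:exp-old} because the loop continues only when $old.ts < t$), so a successful \CAS\/ overwrites a state of strictly smaller timestamp. The timestamp-uniqueness hypothesis is used here to preclude an ABA anomaly, guaranteeing that a state, once left, never recurs, so that a successful \CAS\/ genuinely certifies that the current physical state equals $exp$.

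I would then prove the main invariant by induction along $op_1, op_2, \dots$: letting $a_j$ be the physical state of $b_i$ immediately after $op_j$'s linearization point and $s_j$ the state obtained by applying $op_1,\dots,op_j$ to the sequential object of Algorithm~\ref{alg:updateABD} from the initial value $((0,0), v_0)$, I claim $a_j = s_j$ for all $j$, and that each $op_j$ returns in $r$ exactly what the ABD object returns from $s_{j-1}$. The base case holds because both the initial physical state and the initial ABD state are $((0,0), v_0)$. For the step, the first structural fact guarantees that the physical state is constant between two consecutive linearization points, so the physical state just before $op_j$'s linearization point equals $a_{j-1}$, which by the induction hypothesis equals $s_{j-1}$.

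The inductive step proceeds by a case analysis on $op_j$. If $op_j$ is a $read$, its linearization point is the \CAS\/ in line~\ref{line:read-cas}, which returns the current physical state $a_{j-1} = s_{j-1}$ and leaves it unchanged, matching the ABD $read$; so $a_j = s_j = s_{j-1}$. If $op_j$ is an $update(b_i,t,v)$ returning without entering the loop (case~(1)), its linearization point is the test in line~\ref{line:test}, so $t \le exp.ts$; since $exp$ was last assigned (in line~\ref{line:exp-old} or at initialization) to a value that $b_i$ physically held at an earlier time, monotonicity gives $a_{j-1}.ts \ge exp.ts \ge t$, so the ABD $update$ is a no-op and $a_j = s_j = s_{j-1}$. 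If $op_j$ is an $update(b_i,t,v)$ returning because line~\ref{line:aftercas} holds with $old = exp$ (case~(2a)), the \CAS\/ of line~\ref{line:cas} succeeded: the physical state just before it was $exp = a_{j-1} = s_{j-1}$, and by the loop invariant $s_{j-1}.ts = exp.ts < t$, so the ABD $update$ writes $(t,v)$, matching $a_j = (t,v) = s_j$. Finally, if line~\ref{line:aftercas} holds with $old.ts \ge t$ and $old \neq exp$ (case~(2b)), the \CAS\/ failed and left the state unchanged while returning the current state $old = a_{j-1}$; thus $s_{j-1}.ts = old.ts \ge t$, the ABD $update$ is a no-op, and $a_j = s_j = s_{j-1}$. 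In every case $op_j$ returns the value prescribed by the ABD object, as required.

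I expect the main obstacle to be the case-(1) $update$, whose linearization point is a purely local test touching no shared state. The argument must certify that the local estimate $exp$ furnishes a valid lower bound on the \emph{current} shared timestamp; this is exactly where the monotonicity sub-lemma and the observation that $exp$ always equals some past physical value of $b_i$ are essential, and it is also where timestamp uniqueness is needed to rule out an ABA situation in which $exp$ matches the current state only by coincidence. Establishing monotonicity and pinning down the loop invariant $exp.ts < t$ are the technical crux; the remaining cases then follow routinely from the \CAS\/ semantics.
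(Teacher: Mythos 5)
Your proposal is correct and follows essentially the same route as the paper's: an induction over the linearization order resting on the same two auxiliary facts---the loop invariant that $t > exp.ts$ whenever line~\ref{line:cas} is reached, and the monotonicity of $b_i.ts$---which are exactly the paper's Lemmas~\ref{lem:exp} and~\ref{lem:non-decreasing}. Your explicit invariant that the physical state of $b_i$ equals the sequential ABD state after each linearization point is a slightly cleaner bookkeeping of what the paper tracks implicitly via its case analysis on the preceding operation, but the substance (including the treatment of the case-(1) $update$, whose linearization point touches no shared state) is the same.
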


Since the $read$ implementation is obviously wait-free, we only need
to argue wait freedom for the $update$ operations. To see this,
observe that $t > exp.ts$ 
every time before \CAS\/ is called in line~\ref{line:cas} (see
Lemmas~\ref{lem:exp} in Section~\ref{sec:rwcorrectness} of the
Appendix). Since $b_i.ts = exp.ts$ is a necessary condition for a
successful \CAS\/ call, the value of $b_i$ can only be changed when
$t > b_i.ts$. Hence, the timestamps of the values stored in each $b_i$
are non-decreasing (see Lemma~\ref{lem:non-decreasing} in
Section~\ref{sec:rwcorrectness} of the Appendix). If $b_i.ts$ does not
change between the consecutive iterations of the loop in
lines~\ref{line:loop-start}--\ref{line:loop-end}, timestamp uniqueness
implies that the next call to \CAS\/ will be successful and the loop
terminates. Otherwise, the fact that the timestamps are non-decreasing
implies that $b_i.ts$ is superseded by a higher timestamp. Since there
are only finitely many timestamps lower than $t$, the loop will
terminate no later than the value of $b_i.ts$ reaches or exceeds
$t$. Thus, we have the following result (see
Section~\ref{sec:rwcorrectness} of the Appendix for the full proof):

\begin{lemma}
  The ABDO emulation in Algorithm~\ref{alg:cas-abd} is wait-free
  provided all its runs satisfy timestamp uniqueness.
\label{lem:wait-free}
\end{lemma}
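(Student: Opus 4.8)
The plan is to treat the two operations separately. The $read$ implementation consists of the single $\CAS$ call in line~\ref{line:read-cas} and contains no loop, so it completes in one base-object step and is trivially wait-free. All the difficulty therefore concentrates on the $update$ operation, and specifically on showing that its \textbf{repeat}--\textbf{until} loop in lines~\ref{line:loop-start}--\ref{line:loop-end} executes only finitely many iterations. (If the guard $t > exp.ts$ in line~\ref{line:test} is false, the loop is skipped entirely and $update$ returns after one step, so it suffices to bound the loop.)

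First I would invoke the two invariants already established in the appendix. By Lemma~\ref{lem:exp}, the relation $t > exp.ts$ holds just before every execution of the $\CAS$ in line~\ref{line:cas}. By Lemma~\ref{lem:non-decreasing}, the stored timestamp $b_i.ts$ is non-decreasing over the course of the run: a $\CAS$ can alter $b_i$ only when it succeeds, which requires $exp = b_i$ and hence $t > exp.ts = b_i.ts$, so every modification strictly raises $b_i.ts$. I would then dissect one loop iteration. Writing $old \becomes \CAS(b_i, exp, (t,v))$, the semantics of the base $\CAS$ object give $old = exp$ when the call succeeds and $old = b_i \neq exp$ when it fails. The loop exits as soon as $old = exp$ or $old.ts \ge t$; thus the only way to proceed to a further iteration is that the $\CAS$ failed and $old.ts < t$.

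The crux is to prove that every continuing iteration strictly advances the timestamp carried in $exp$. Suppose an iteration continues, so the $\CAS$ failed and $old = b_i \neq exp$ at the time of the call. I claim $old.ts > exp.ts$. By \emph{timestamp uniqueness}, $b_i$ receives at most one $update(b_i, ts, \ast)$ for each timestamp $ts$, so the value associated with any stored timestamp is determined uniquely; in particular $old.ts = exp.ts$ would force $old = exp$, contradicting $old \neq exp$. Hence $old.ts \neq exp.ts$, and the non-decreasing property then yields $old.ts > exp.ts$. Since line~\ref{line:exp-old} assigns $exp \becomes old$, the next iteration begins with a value of $exp.ts$ that is strictly larger than before, yet still strictly below $t$ (because continuation forced $old.ts < t$).

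Finally I would close with a finiteness argument. Each timestamp is a pair $(num, c)$ with $num \in \N$ and $c$ drawn from the finite client set, so only finitely many timestamps lie strictly below the fixed $t$. As every continuing iteration moves $exp.ts$ to a strictly larger timestamp that is still $< t$, there can be only finitely many continuing iterations; after the last one the $\CAS$ either succeeds or returns $old.ts \ge t$, setting $done$ and terminating the loop. Therefore $update$ returns after finitely many steps, which gives wait-freedom. The main obstacle is the middle step: establishing that a failed $\CAS$ on a continuing iteration necessarily reports a strictly higher timestamp. This is exactly where \emph{timestamp uniqueness} (ruling out the degenerate case $old.ts = exp.ts$ with $old \neq exp$) must be combined with monotonicity of $b_i.ts$.
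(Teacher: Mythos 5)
Your proof is correct and follows essentially the same route as the paper's: read is trivially wait-free, and the update loop is bounded by combining Lemma~\ref{lem:exp}, the monotonicity of $b_i.ts$ from Lemma~\ref{lem:non-decreasing}, and timestamp uniqueness to show that each continuing iteration strictly advances the relevant timestamp while it stays below $t$, of which there are only finitely many. The only cosmetic difference is that you track $exp.ts$ whereas the paper tracks $b_i.ts$ just before each \CAS\ call, and you spell out more explicitly than the paper why timestamp uniqueness is needed (to exclude $old.ts = exp.ts$ with $old \neq exp$).
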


\noindent
Given that timestamp uniqueness holds in all runs of MW-ABD, we
receive the following:

\begin{thm}
  The CAS-ABD algorithm is an $f$-tolerant implementation of a
  wait-free MWMR atomic register.
\label{thm:cas-abd}
\end{thm}

{\bf Time Complexity} It is easy to see that in the absence of
contention, the $update$ operation terminates in at most $2$ rounds of
the base object accesses. This can be further optimized if the clients
keep a local copy of the most recent value read from each object $b_i$
at the read round of CAS-ABD, and then use this value to initialize
the expected value parameter $exp$ of \CAS.  Thus, in the best case
scenarios when the object replies are received in a timely fashion,
and there is no contention, $update$ will terminate in just $1$ round,
thus achieving the $2$ round complexity of MW-ABD overall.

In the presence of contention, the number of unsuccessful \CAS\/ calls
executed within the update operation loop in
lines~\ref{line:loop-start}--\ref{line:loop-end} is bounded by the
number of unique timestamps returned by the \CAS\/ calls that are
smaller than the timestamp $t$ supplied to the update. Given the way
the timestamps are chosen by the algorithm, the number of such
timestamps per each of the $c$ concurrently executing clients is
constant. However, since the $num$ component of each timestamp can be
shared by concurrently executing clients, the overall time complexity
of update can be as high as $c^2$. In Section~\ref{sec:rwcorrectness}
of the Appendix, we prove that $c$ is equal to the maximum number of
clients that can execute concurrently with the update thus obtaining
the following:

\begin{thm}
  The CAS-ABD time complexity is {\em adaptive}\/ to concurrency
  guaranteeing that each operation $op$ terminates in at most $O(c^2)$
  base object accesses where $c=\PntCont(op)$.
\label{thm:time}
\end{thm}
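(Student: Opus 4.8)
The plan is to reduce the asynchronous time of a high-level operation to the length of the $update$ loop at a single base object, and then to bound that length by $O(c^2)$. Recall that in MW-ABD every high-level \Read\/ or \Write\/ consists of $O(1)$ rounds, and in each round the client triggers a base operation ($read$ or $update$) on all $n$ objects in parallel and waits for $n-f$ responses. Since accesses to distinct objects proceed concurrently, the asynchronous time of a round is the maximum, over the $n$ objects, of the number of sequential \CAS\/ calls the round performs at that object. A $read$ round performs a single \CAS\/ per object (line~\ref{line:read-cas}), so its cost is $O(1)$; the entire cost therefore comes from the $update$ round, whose cost at object $b_i$ equals the number of iterations of the loop in lines~\ref{line:loop-start}--\ref{line:loop-end}, i.e., the number of \CAS\/ calls it performs at $b_i$. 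Hence it suffices to show that, for every invocation $update(b_i,t,v)$ with $t=(num_t,c_t)$ issued inside a high-level operation $op$, the loop executes $O(c^2)$ iterations where $c=\PntCont(op)$.

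First I would characterize the iterations that do not terminate the loop. An iteration continues only when the \CAS\/ in line~\ref{line:cas} returns $old\neq exp$ with $old.ts<t$ (otherwise line~\ref{line:aftercas} sets $done$ and the loop exits). By Lemma~\ref{lem:exp} we have $t>exp.ts$ before each \CAS\/, and by Lemma~\ref{lem:non-decreasing} the timestamp stored in $b_i$ is non-decreasing. After a failed iteration we set $exp\becomes old$, so if $b_i$ were unchanged at the next \CAS\/ it would return $old=exp$ and terminate the loop; thus every additional failed iteration witnesses a strictly larger value of $b_i.ts$, and by timestamp uniqueness this change was produced by a concurrent $update$ carrying a fresh, strictly larger timestamp. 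Consequently the values of $old$ observed across the failed iterations form a strictly increasing chain $t_1<t_2<\dots<t_m<t$ of timestamps, each installed in $b_i$ by a high-level \Write\/ (or the write-back of a \Read\/) that overlaps $op$. The number of loop iterations is therefore at most $m+1$, where $m$ is the number of distinct timestamps strictly below $t$ that supersede $b_i$ during the loop.

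The core of the argument is to bound $m$ by $O(c^2)$, which I would do by decomposing each $t_j=(num_j,c_j)$ into its $num$ and client components. For the client component, since every timestamp in the chain is installed by an operation overlapping $op$, only clients concurrent with $op$ contribute; using the appendix fact that the number of such clients equals $\PntCont(op)=c$, at most $c$ distinct timestamps can share any fixed value of $num$. For the $num$ component, I would argue that the distinct values $num_j$ occupy an interval of length $O(c)$ immediately below $num_t$: a superseding write must carry $num$ strictly larger than the (non-decreasing) current $b_i.ts.num$ yet at most $num_t$ to remain below $t$, and, because each $num$ is chosen as one plus the maximum read by its author, advancing $b_i$'s $num$ from its value at the start of the loop up to $num_t$ forces a dependency chain of overlapping writes whose number of distinct levels is governed by $c$. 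Combining the two bounds, $m\le O(c)\cdot c=O(c^2)$, yielding the claimed loop length and hence the $O(c^2)$ time bound.

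The main obstacle is the bound of $O(c)$ on the number of distinct $num$ values, which is where the factor of $c$ rather than a constant genuinely enters. The difficulty is that a single concurrent client may complete several operations during a long loop and thereby write several distinct $num$ values, so the bound cannot follow from charging one level to each client; it must instead exploit that each new $num$ level is generated by reading the previous one, and that at most $c$ independent such chains can make progress concurrently with $op$. Making this precise requires the pipelining/covering argument deferred to Section~\ref{sec:rwcorrectness} of the Appendix, together with the lemma there identifying $c$ with $\PntCont(op)$; the remaining steps---the strictly increasing chain structure and the at-most-$c$ clients per $num$---follow routinely from timestamp uniqueness, Lemma~\ref{lem:non-decreasing}, and the definition of point contention.
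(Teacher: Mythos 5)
Your overall strategy is the same as the paper's: reduce the cost to the number of failed \CAS\/ iterations at a single object, and bound that number by a product of two factors --- an $O(c)$ bound on the relevant $num$ levels and a bound of $c$ on the number of concurrent operations sharing a given $num$ (the paper's Lemmas~\ref{la:tsgap} and~\ref{la:sametimestampcontention}). Your observation that each failed iteration witnesses a strictly larger $b_i.ts$, so the obstructions correspond to distinct timestamps and hence (by timestamp uniqueness) to distinct operations each obstructing at most once, is also exactly how the paper closes the argument.

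There are, however, two concrete places where your write-up does not go through as stated. First, the claim that the distinct values $num_j$ ``occupy an interval of length $O(c)$ immediately below $num_t$'' is false: an operation that was invoked before $op$ called $update$ but had not yet completed may have computed its timestamp arbitrarily long ago, so it can install a $num$ far below $num_t$ during $op$'s loop. The paper avoids this by splitting obstructors into three categories --- those completing before time $t$ (at most a constant, via Lemma~\ref{la:concurrentobstructions}), those pending at time $t$ (at most $\PntCont(op)$ by definition of point contention, counted directly rather than via their $num$), and those invoked at time $t$ or later, to which alone the interval bound of Lemma~\ref{la:tsgap} applies. Your related claim that every timestamp in the chain is installed by an operation overlapping $op$ likewise fails for the first couple of chain elements, which may come from long-completed operations whose \CAS\/ effects are delayed; the paper needs Lemma~\ref{la:concurrentobstructions} precisely to cap these at a constant. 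Second, the step you yourself flag as the main obstacle --- the $O(c)$ bound on the $num$ gap --- is left as a gesture toward ``$c$ independent chains,'' which is not the actual argument: the paper's Lemma~\ref{la:tsgap} shows that if $k$ operations are pending at time $t$, then any operation starting at or after $t$ has $num \geq ts(op).num - k - 1$, by observing that a gap of $k+2$ would force a dependency chain of $k+1$ operations all invoked before $t$, one of which must have completed before $t$ with a timestamp exceeding the supposed maximum --- a pigeonhole on a single chain, not a count of parallel chains. Neither issue changes the final $O(c^2)$ bound, but the case split and the pigeonhole argument are needed to make your decomposition sound.
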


\section{Conclusions and Future Work}

We studied the resource complexity of emulating an $f$-tolerant
read/write MWMR register from a collection of atomic MWMR registers
stored on crash-prone servers. We established a number of lower
bounds that apply to any fault-tolerant emulation of a MWMR register,
which satisfies weak correctness guarantees: single-writers safety,
and solo-termination. In particular, we proved that no such emulation
can use fewer than $kf$ registers to support $k>0$ clients or have its
storage consumption adaptive to concurrency. We also characterized
possible allocations of registers to servers by showing that if the
number of registers per server is bounded by a known constant $m$,
then supporting $\ell m$ clients requires $f + 1$ more servers in
addition to the requisite $\ell f$ servers implied by our storage
bound.

In search for a simple RMW primitive that can be leveraged for
obtaining a space-efficient implementation, we studied reliable
storage emulations from crash-prone CAS objects. To this end, we
presented a constant space emulation of an MWMR atomic read/write
register that utilizes a single CAS object per server, tolerates up to
a minority of server crashes, and has time complexity adaptive to
point contention.
 
Our work leaves some questions open for future work. First, observe
that ABD can be applied in a straightforward fashion to implement an
MWMR wait-free atomic register from fault-prone registers by assigning
each client to a dedicated set of $2f+1$ registers stored on $2f+1$
different servers. An interesting open question is then whether our
lower bound can be further tightened to match this storage cost, or
there are emulations that can achieve a tighter storage cost (e.g., by
weakening their correctness guarantees). Second, the worst-case time
complexity of our CAS-based ABD implementation is quadratic in point
contention. It will be interesting to explore whether it can be
further improved (e.g., by modifying the ABD timestamp selection
mechanism), or this is an inherent limitation.

\newpage
\bibliographystyle{IEEEtran}
\bibliography{lit}

\begin{thebibliography}{10}
\providecommand{\url}[1]{#1}
\csname url@samestyle\endcsname
\providecommand{\newblock}{\relax}
\providecommand{\bibinfo}[2]{#2}
\providecommand{\BIBentrySTDinterwordspacing}{\spaceskip=0pt\relax}
\providecommand{\BIBentryALTinterwordstretchfactor}{4}
\providecommand{\BIBentryALTinterwordspacing}{\spaceskip=\fontdimen2\font plus
\BIBentryALTinterwordstretchfactor\fontdimen3\font minus
  \fontdimen4\font\relax}
\providecommand{\BIBforeignlanguage}[2]{{%
\expandafter\ifx\csname l@#1\endcsname\relax
\typeout{** WARNING: IEEEtran.bst: No hyphenation pattern has been}%
\typeout{** loaded for the language `#1'. Using the pattern for}%
\typeout{** the default language instead.}%
\else
\language=\csname l@#1\endcsname
\fi
#2}}
\providecommand{\BIBdecl}{\relax}
\BIBdecl

\bibitem{pnuts}
\BIBentryALTinterwordspacing
B.~F. Cooper, R.~Ramakrishnan, U.~Srivastava, A.~Silberstein, P.~Bohannon,
  H.-A. Jacobsen, N.~Puz, D.~Weaver, and R.~Yerneni, ``Pnuts: Yahoo!'s hosted
  data serving platform,'' \emph{Proc. VLDB Endow.}, vol.~1, no.~2, pp.
  1277--1288, Aug. 2008. [Online]. Available:
  \url{http://dl.acm.org/citation.cfm?id=1454159.1454167}
\BIBentrySTDinterwordspacing

\bibitem{riak}
Riak, \url{http://basho.com/riak}.

\bibitem{spinnaker}
J.~Rao, E.~J. Shekita, and S.~Tata, ``Using paxos to build a scalable,
  consistent, and highly available datastore,'' \emph{PVLDB}, vol.~4, no.~4,
  pp. 243--254, 2011.

\bibitem{zookeeper}
P.~Hunt, M.~Konar, F.~P. Junqueira, and B.~Reed, ``Zookeeper: wait-free
  coordination for internet-scale systems,'' in \emph{USENIX ATC}, Berkeley,
  CA, USA, 2010.

\bibitem{mongodb}
mongoDB, \url{http://www.mongodb.org/}.

\bibitem{s3}
A.~S. S. S.~A. S3), \url{http://aws.amazon.com/s3/}.

\bibitem{simple-db}
A.~SimpleDB, \url{http://aws.amazon.com/simpledb/}.

\bibitem{dynamodb}
A.~DynamoDB, \url{http://aws.amazon.com/dynamodb/}.

\bibitem{azure-storage}
M.~A. Storage, \url{http://www.windowsazure.com/en-us/manage/services/storage}.

\bibitem{ABD95}
\BIBentryALTinterwordspacing
H.~Attiya, A.~Bar-Noy, and D.~Dolev, ``Sharing memory robustly in
  message-passing systems,'' \emph{J. ACM}, vol.~42, no.~1, pp. 124--142, Jan.
  1995. [Online]. Available: \url{http://doi.acm.org/10.1145/200836.200869}
\BIBentrySTDinterwordspacing

\bibitem{EnglertS00}
B.~Englert and A.~A. Shvartsman, ``Graceful quorum reconfiguration in a robust
  emulation of shared memory,'' in \emph{International Conference on
  Distributed Computing Systems (ICDCS)}, 2000, pp. 454--463.

\bibitem{rambo}
S.~Gilbert, N.~A. Lynch, and A.~A. Shvartsman, ``Rambo: a robust,
  reconfigurable atomic memory service for dynamic networks,''
  \emph{Distributed Computing}, vol.~23, no.~4, pp. 225--272, 2010.

\bibitem{DuttaGLV10}
P.~Dutta, R.~Guerraoui, R.~R. Levy, and M.~Vukolic, ``Fast access to
  distributed atomic memory,'' \emph{SIAM Journal on Computing}, vol.~39,
  no.~8, pp. 3752--3783, 2010.

\bibitem{GeorgiouNS09}
C.~Georgiou, N.~C. Nicolaou, and A.~A. Shvartsman, ``Fault-tolerant semifast
  implementations of atomic read/write registers,'' \emph{Journal of Parallel
  Distributed Computing}, vol.~69, no.~1, pp. 62--79, 2009.

\bibitem{DBLP:journals/eatcs/AguileraKMMS10}
M.~K. Aguilera, I.~Keidar, D.~Malkhi, J.-P. Martin, and A.~Shraer,
  ``Reconfiguring replicated atomic storage: A tutorial,'' \emph{Bulletin of
  the EATCS}, vol. 102, pp. 84--108, 2010.

\bibitem{2003:gafni}
E.~Gafni and L.~Lamport, ``Disk {Paxos},'' \emph{Distributed Computing},
  vol.~16, no.~1, pp. 1--20, 2003.

\bibitem{lamport1}
L.~Lamport, ``On interprocess communication. part ii: Algorithms,''
  \emph{Distributed Computing}, vol.~1, no.~2, pp. 86--101, 1986.

\bibitem{mwr}
C.~Shao, E.~Pierce, and J.~L. Welch, ``Multi-writer consistency conditions for
  shared memory objects,'' in \emph{DISC 2003}, 2003, pp. 106--120.

\bibitem{JCT98}
P.~Jayanti, T.~Chandra, , and S.~Toueg, ``Fault-tolerant wait-free shared
  objects,'' \emph{Journal of the ACM}, vol.~45, no.~3, pp. 451--500, 1998.

\bibitem{faultyMemoryAfek1993benign}
Y.~Afek, M.~Merritt, and G.~Taubenfeld, ``Benign failure models for shared
  memory,'' in \emph{Distributed Algorithms}.\hskip 1em plus 0.5em minus
  0.4em\relax Springer, 1993, pp. 69--83.

\bibitem{2006:abraham}
I.~Abraham, G.~Chockler, I.~Keidar, and D.~Malkhi, ``{Byzantine} disk {Paxos}:
  Optimal resilience with {Byzantine} shared memory,'' \emph{Distributed
  Computing}, vol.~18, no.~5, pp. 387--408, 2006.

\bibitem{burns-lynch}
J.~E. Burns and N.~A. Lynch, ``Bounds on shared memory for mutual exclusion,''
  \emph{Inf. Comput.}, vol. 107, no.~2, pp. 171--184, 1993.

\bibitem{attiya-book}
H.~Attiya and J.~Welch, \emph{Distributed Computing: Fundamentals, Simulations,
  and Advanced Topics}, 2nd~ed.\hskip 1em plus 0.5em minus 0.4em\relax Wiley,
  2004, ch.~2, pp. 13--14.

\bibitem{point-cont-afek}
\BIBentryALTinterwordspacing
Y.~Afek, H.~Attiya, A.~Fouren, G.~Stupp, and D.~Touitou, ``Long-lived renaming
  made adaptive,'' ser. PODC '99.\hskip 1em plus 0.5em minus 0.4em\relax New
  York, NY, USA: ACM, 1999, pp. 91--103. [Online]. Available:
  \url{http://doi.acm.org/10.1145/301308.301335}
\BIBentrySTDinterwordspacing

\bibitem{point-cont}
\BIBentryALTinterwordspacing
H.~Attiya and A.~Fouren, ``Algorithms adapting to point contention,'' \emph{J.
  ACM}, vol.~50, no.~4, pp. 444--468, Jul. 2003. [Online]. Available:
  \url{http://doi.acm.org/10.1145/792538.792541}
\BIBentrySTDinterwordspacing

\bibitem{Attiya:1990:RAE:79147.79158}
\BIBentryALTinterwordspacing
H.~Attiya, A.~Bar-Noy, D.~Dolev, D.~Peleg, and R.~Reischuk, ``Renaming in an
  asynchronous environment,'' \emph{J. ACM}, vol.~37, no.~3, pp. 524--548, Jul.
  1990. [Online]. Available: \url{http://doi.acm.org/10.1145/79147.79158}
\BIBentrySTDinterwordspacing

\bibitem{Bracha:1985:ACB:4221.214134}
\BIBentryALTinterwordspacing
G.~Bracha and S.~Toueg, ``Asynchronous consensus and broadcast protocols,''
  \emph{J. ACM}, vol.~32, no.~4, pp. 824--840, Oct. 1985. [Online]. Available:
  \url{http://doi.acm.org/10.1145/4221.214134}
\BIBentrySTDinterwordspacing

\bibitem{herlihy}
M.~Herlihy and J.~M. Wing, ``Linearizability: A correctness condition for
  concurrent objects,'' \emph{ACM Trans. Program. Lang. Syst.}, vol.~12, no.~3,
  pp. 463--492, 1990.

\end{thebibliography}

\newpage
\appendix
\section{Space Lower Bounds}
\label{sec:space:app}

\begin{thm}
  For any $k\ge 0$, $f\ge 0$, there is no $f$-tolerant algorithm
  emulating an SW-safe solo-terminating MWSR register for $k$ clients
  using less than $kf$ base registers.
\label{thm:app:kf}
\end{thm}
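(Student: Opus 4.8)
The plan is to obtain the bound as an immediate consequence of Lemma~\ref{lem:exhaustive-run}, whose inductive construction already carries out all the combinatorial work. First I would dispose of the degenerate cases: when $k=0$ or $f=0$ we have $kf=0$, and since no algorithm can use fewer than zero base registers, the claim holds trivially. So assume $k>0$ and $f>0$, and suppose for contradiction that some $f$-tolerant algorithm $A$ emulates an SW-safe solo-terminating MWSR register for $k$ clients while using strictly fewer than $kf$ base registers.

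Next I would invoke the key lemma with $i=k$. Fixing any set $F \subseteq {\cal S}$ with $|F|=f$ (such a set exists since $f$-tolerance presupposes at least $f$ servers), Lemma~\ref{lem:exhaustive-run} yields a write-only sequential run $r_k$ of $A$ ending at some time $t_k$ with $|Cov(t_k)| \ge kf$. The crucial observation---and really the only step left to spell out---is that the covered registers are genuinely charged against the algorithm's resource consumption: every register in $Cov(t_k)$ carries a pending (covering) low-level write, hence has had a low-level operation triggered on it during $r_k$, and is therefore a distinct base object used by $A$ in $r_k$. Consequently the resource consumption of $A$ in $r_k$ is at least $|Cov(t_k)| \ge kf$.

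Since the resource complexity of $A$ is defined as the maximum resource consumption over all of its runs, it is at least the consumption in the particular run $r_k$, namely at least $kf$. This contradicts the assumption that $A$ uses fewer than $kf$ base registers, so no such $A$ can exist, establishing the theorem.

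As for where the difficulty lies: essentially all of it has been pushed into Lemma~\ref{lem:exhaustive-run}, so the present argument is short. The one point meriting care is the identification of covered registers with consumed resources---one must confirm that $|Cov(t_k)|$ counts distinct physical base registers rather than distinct covering write operations, which is immediate because $Cov(t)$ is by definition a \emph{set of base registers}. A secondary, purely bookkeeping subtlety is ensuring the hypotheses of the lemma are met at the target value $i=k$, i.e.\ that a set $F$ of $f$ servers can be fixed; this is implicit in the model once $f$-tolerance is assumed.
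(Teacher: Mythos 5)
Your proposal is correct and follows essentially the same route as the paper's own proof: instantiate Lemma~\ref{lem:exhaustive-run} at $i=k$ to obtain a run with at least $kf$ distinct covered base registers, and conclude that the resource complexity is at least $kf$. The paper's version is terser, but your added remarks (the degenerate cases $kf=0$ and the observation that $Cov(t_k)$ is a set of distinct registers, each of which has been used) are sound elaborations, not deviations.
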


\begin{proof}
  Pick arbitrary $k\ge 0$, $f\ge 0$, and assume by contradiction that
  there exists an $f$-tolerant algorithm $A$ that emulates an SW-safe
  solo terminating MWSR register for $k$ clients with fewer than $kf$
  base registers. By Lemma~\ref{lem:exhaustive-run}, there exists a
  run $r$ of $A$ consisting of $k$ high-level writes by $k$ distinct
  clients such that by the end of $r$, the number of distinct base
  registers having a covering write is at least $kf$. Hence, $A$ will
  require at least $kf$ distinct base registers to support $k$
  clients. A contradiction. 
\end{proof}

\begin{thm}
  For any $k > 0$, and $f \ge 0$, there is no $f$-tolerant algorithm
  emulating an SW-safe solo-terminating MWSR register for $k$ clients
  with less than $2f+1$ servers.
\label{thm:app:2f+1}
\end{thm}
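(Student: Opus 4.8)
The plan is to derive the bound directly from Corollary~\ref{cor:2f}, which already encapsulates the covering argument: just before any write completes, strictly more than $2f$ servers host no covered register. The key observation is that the mere existence of this free set of servers forces the total server count above $2f$. I would argue by contradiction: suppose an $f$-tolerant algorithm $A$ emulates an SW-safe solo-terminating MWSR register for $k > 0$ clients using at most $2f$ servers, so that $|\mathcal{S}| \le 2f$.

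To invoke Corollary~\ref{cor:2f} I first instantiate Lemma~\ref{lem:exhaustive-run} at the base of its induction, obtaining the empty run $r_0$ with $t_0 = 0$ steps and $Cov(t_0) = \emptyset$. Since $k > 0$, there is at least one client available to perform a single high-level write, so the corollary applies at index $i = 1$, yielding $|\mathcal{S} \setminus \delta(Cov(t_0))| > 2f$. Because $\mathcal{S} \setminus \delta(Cov(t_0))$ is a subset of $\mathcal{S}$, I conclude $|\mathcal{S}| \ge |\mathcal{S} \setminus \delta(Cov(t_0))| > 2f$, i.e., $|\mathcal{S}| \ge 2f+1$, contradicting $|\mathcal{S}| \le 2f$. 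This establishes the theorem.

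The only point requiring care --- and the closest thing to an obstacle --- is verifying that the preconditions of Corollary~\ref{cor:2f} are genuinely met at $i = 1$. The corollary is stated for all $i > 0$, and its derivation (through Lemmas~\ref{lem:1write} and~\ref{lem:2f}) presumes that $r_{i-1}$ can be extended by a write from a fresh client $c_i \notin C(t_{i-1})$; for $i = 1$ this requires exactly one client, guaranteed by the hypothesis $k > 0$, and since no write has yet occurred, $Cov(t_0) = \emptyset$ and $\delta(Cov(t_0)) = \emptyset$ hold trivially. All of the technical work --- constructing the adversary $Ad_1$ and showing the write must touch uncovered registers on more than $2f$ servers --- is already discharged inside Corollary~\ref{cor:2f}, so nothing further needs to be proved here. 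For completeness I would also remark that the same $2f+1$ lower bound is recoverable from classical results on fault-tolerant register emulation, but the self-contained route above is immediate.
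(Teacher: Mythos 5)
Your proposal is correct and follows essentially the same route as the paper's own proof: both argue by contradiction, apply Corollary~\ref{cor:2f} at $i=1$, and use the fact that $Cov(t_0)=\emptyset$ to conclude $|{\cal S}| = |{\cal S}\setminus\delta(Cov(t_0))| > 2f$. Your added care in checking the corollary's preconditions at $i=1$ is sound but does not change the argument.
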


\begin{proof}
  Assume by contradiction that there exists an $f$-tolerant algorithm
  emulating an SW-safe solo-terminating MWSR register for $k>0$
  clients using $2f$ servers. By Corollary~\ref{cor:2f}, there exists
  a run $r_1$ of $A$ consisting of a single high-level write $W_1$ by
  a client $c_1$ such that
  $|{\cal S} \setminus \delta(Cov(t_0))| > 2f$ where $t_0=0$. Since no
  base registers are covered at $t_0$,
  $|{\cal S} \setminus \delta(Cov(t_0))| = |{\cal S}| > 2f$. However,
  by assumption, $|{\cal S}| = 2f$. A contradiction.
\end{proof}

\section{Correctness of CAS-ABD}
\label{sec:rwcorrectness}

We first argue that our emulation is a linearizable implementation of
ABDO. The argument relies on the following auxiliary invariants.

\begin{lemma}
  If line~\ref{line:cas} is reached, then $t > exp.ts$.
\label{lem:exp}
\end{lemma}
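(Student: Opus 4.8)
The plan is to prove the invariant by a direct case analysis on the control flow that can lead to an execution of the \CAS\ call in line~\ref{line:cas}. The key observation is that line~\ref{line:cas} lies inside the repeat loop of lines~\ref{line:loop-start}--\ref{line:loop-end}, and that the local variable $exp$ is modified nowhere inside $update$ except at line~\ref{line:exp-old}. Hence it suffices to distinguish the two ways in which control can arrive at line~\ref{line:cas}: either this is the first iteration of the loop, or it is a later iteration reached by falling through the \textbf{until} test in line~\ref{line:loop-end}.

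For the first iteration, I would note that the loop is entered only after the guard $t > exp.ts$ in line~\ref{line:test} evaluates to true. Since $exp$ is not touched between line~\ref{line:test} and the first execution of line~\ref{line:cas}, the relation $t > exp.ts$ still holds when the \CAS\ is invoked.

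For a later iteration, I would argue that control reaches line~\ref{line:cas} again only if the \textbf{until} test failed, i.e.\ $done$ was still false after line~\ref{line:aftercas}. By inspection of line~\ref{line:aftercas}, $done$ is left false exactly when the condition $old = exp \vee old.ts \ge t$ is false, which in particular forces $old.ts < t$. The only statement executed before looping back is line~\ref{line:exp-old}, which assigns $exp \gets old$; therefore at the top of the next iteration $exp.ts = old.ts < t$, so $t > exp.ts$ holds again when line~\ref{line:cas} is reached. Combining the two cases yields the invariant.

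The step I expect to require the most care is confirming that the case analysis is exhaustive and that $exp$ is genuinely left unmodified on every path between the point where $t > exp.ts$ is established and the subsequent \CAS. In particular, one must check that the $read$ operation (line~\ref{line:read-cas}) does not write to $exp$, and that the persistence of $exp$ across invocations cannot violate the invariant: any value of $exp$ inherited from a previous $update$ with $exp.ts \ge t$ would simply cause the guard in line~\ref{line:test} to fail, so that the loop is never entered and line~\ref{line:cas} is never reached at all.
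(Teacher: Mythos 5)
Your proof is correct and follows essentially the same route as the paper's: the first-iteration case matches the paper's base case (the guard at line~\ref{line:test}), and the later-iteration case matches the paper's inductive step (failure of the test at line~\ref{line:aftercas} gives $old.ts < t$, and line~\ref{line:exp-old} propagates this to $exp$). Your extra observation about $exp$ persisting across invocations is a sensible sanity check but does not change the argument.
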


\begin{proof}
  The proof is by induction on the number of iteration of the loop in
  lines~\ref{line:loop-start}--\ref{line:loop-end}. For the base case,
  note that line~\ref{line:test}, $t > exp.ts$ is the necessary
  condition for entering the loop. Hence, the lemma holds first time
  line~\ref{line:cas} is reached. Next, assume that the result is true
  for all iterations $k\ge 1$, and consider iteration $k+1$. 
  Since iteration $k+1$ is reached, the condition in
  line~\ref{line:aftercas} must be false at iteration $k$, that is,
  $old.ts < t$. By line~\ref{line:exp-old}, at the beginning of
  iteration $k+1$, $exp = old$, and therefore, $exp.ts = old.ts < t$
  as needed.
\end{proof}

\noindent
We now show that $b_i.ts$ is non-decreasing:

\begin{lemma}
  Let $b_i.ts_1$ and $b_i.ts_2$ be the values of $b_i.ts$ at times
  $t_1$ and $t_2$ respectively. If $t_1 < t_2$, then
  $b_i.ts_1 \le b_i.ts_2$.
\label{lem:non-decreasing}
\end{lemma}

\begin{proof}
  Observe that $b_i.ts$ can only change as a result of a successful
  \CAS\/ invocation in line~\ref{line:cas}. The necessary condition
  for that to happen is $exp = b_i$ in line~\ref{line:cas}. By
  Lemma~\ref{lem:exp}, $t > exp.ts = b_i.ts$. Hence, the value of
  $b_i.ts$ is either left unchanged, or increases as needed.
\end{proof}

\noindent
Next, we show linearizability:

\begin{lemma}
  Let $r$ be a run of the ABDO emulation in
  Algorithm~\ref{alg:cas-abd}, and $\sigma$ be a sequential run
  obtained from $r$ by shrinking each $update$ and $read$ operation to
  occur at its linearization point. Then, $\sigma$ is a sequential run
  of the ABD object in Algorithm~\ref{alg:updateABD}.
\label{lem:app:linear}
\end{lemma}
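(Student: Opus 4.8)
The plan is to exhibit a step-by-step correspondence between the state of the underlying \CAS\/ object $b_i$ in $r$ and the state of the ABD object in $\sigma$, and then verify operation-by-operation that each linearization point reproduces the ABD sequential specification of Algorithm~\ref{alg:updateABD}. The starting observation is that $b_i$ is literally a $TSVal$ pair $(b_i.ts, b_i.val)$, exactly the state carried by the ABD object, and that the two are initialized identically to $((0,0), v_0)$. Since the linearization points are individual steps of $r$ belonging to pairwise distinct operations, they induce a total order, which is precisely the order in which operations appear in $\sigma$.

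First I would show that $b_i$ changes only at the linearization points of those $update$ operations that return via a successful \CAS. Indeed, $b_i$ can be modified only by a successful \CAS\/ in line~\ref{line:cas}, whose success condition $exp = b_i$ is exactly the condition $old = exp$; the read \CAS\/ in line~\ref{line:read-cas} uses $new = exp$ and therefore never alters $b_i$, a failed \CAS\/ leaves $b_i$ intact, and an $update$ linearized at line~\ref{line:test} issues no \CAS\/ at all. Consequently every change of $b_i$ in $r$ coincides with a linearization point, so by a straightforward induction over the totally ordered linearization points, the state of $b_i$ just before each linearization step in $r$ equals the state of the ABD object just before the corresponding operation in $\sigma$.

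It then remains to check, for each operation type, that the return value and the induced state transition agree with the ABD spec applied to this common state. A $read$ linearized at line~\ref{line:read-cas} returns the current $prev = b_i$ and leaves $b_i$ unchanged, matching the ABD $read$. For an $update$ terminating with $old = exp$ the \CAS\/ succeeds, and Lemma~\ref{lem:exp} gives $b_i.ts = old.ts = exp.ts < t$, so both the \CAS\/ and the ABD $update$ overwrite $b_i$ with $(t,v)$. For an $update$ terminating via $old.ts \ge t$ the \CAS\/ failed (note that Lemma~\ref{lem:exp} makes this disjoint from $old = exp$, since $old = exp$ forces $old.ts = exp.ts < t$), $b_i$ is unchanged, and $b_i.ts = old.ts \ge t$, so the ABD $update$ is likewise a no-op.

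The main obstacle is the remaining case, where an $update$ returns without entering the loop and is linearized at the purely local test $t > exp.ts$ in line~\ref{line:test} rather than at an access to $b_i$: here I must justify that an ABD $update$ at this instant is a no-op even though the triggering condition is phrased in terms of the local $exp$. To this end I would establish the invariant $exp.ts \le b_i.ts$ throughout each operation, using that $exp$ is initialized to the initial value of $b_i$ and is thereafter reassigned in line~\ref{line:exp-old} to $old$, a value that $b_i$ held at an earlier time, together with the monotonicity of $b_i.ts$ from Lemma~\ref{lem:non-decreasing}. Given this invariant, this case has $t \le exp.ts \le b_i.ts$, so the ABD $update$ does nothing, in agreement with the emulation, which never touches $b_i$. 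Combining the cases with the state correspondence completes the induction and shows that $\sigma$ is a valid sequential run of the ABD object.
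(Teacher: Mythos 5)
Your proof is correct and reaches the same conclusion by the same basic strategy as the paper's --- an induction over the totally ordered linearization points that tracks the content of $b_i$ --- but the decomposition is different. The paper's inductive step is a local consistency check: it fixes a read linearized at $t_k$ and argues, by case analysis on whether the operation linearized at $t_{k-1}$ was a read or an update (and, in the latter case, on whether $t > b_i.ts_{k-2}$), that the read's return value matches the specification. You instead carry an explicit simulation invariant --- the value of $b_i$ just before each linearization point in $r$ equals the state of the ABD object just before the corresponding operation of $\sigma$ --- and then verify each operation type once against that common state. Both arguments rest on the same two facts: $b_i$ changes only at successful \CAS\/ calls, each of which is the linearization point of its $update$, and Lemma~\ref{lem:exp} guarantees that a successful \CAS\/ occurs exactly when the ABD guard $b_i.ts < t$ holds. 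Your formulation has two advantages: it handles arbitrary sequences of operations uniformly rather than reasoning only about the immediately preceding one, and it isolates and proves the invariant $exp.ts \le b_i.ts$ (via line~\ref{line:exp-old} and Lemma~\ref{lem:non-decreasing}), which the paper's proof uses without justification when it asserts $exp.ts_{k-1} \le b_i.ts_{k-2}$ at line~\ref{line:test}; this invariant is precisely what is needed to show that an $update$ linearized at the purely local test in line~\ref{line:test} is also a no-op for the ABD object.
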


\begin{proof}
  Let $t_1,t_2,\dots$ such that $t_i < t_{i+1}$, $i\ge 1$, denote the
  times at which the linearization points occur in $r$.  The proof is
  by induction on $t_i$. For the base case, consider the first
  linearization point $t_1$. If $t_1$ is the linearization point of
  $read$, then its return value $((0, 0), v_0)$; and if $t_1$ is the
  linearization point of $update$, then its return value is
  $ack$. Since both return values are identical to those produced by
  the $read$ and $update$ of the ABD object if invoked at the initial
  state, the result holds.

  Next, assume that the result is true for the first $k-1$
  linearization points, and consider the $k$th linearization point
  $t_k$. If $t_k$ is the linearization point of $update$, then its
  return value is $ack$, which is consistent with the sequential
  specification of the ABD object. 

  Suppose that $t_k$ is the linearization point of a read
  operation. Suppose that the linearization point $t_{k-1}$ is
  associated with a read.  Since for any value of $exp$,
  $\CAS(exp ,exp)$ does not changes the content of $b_i$, the return
  value of read will be the same as that of the read linearized at
  $t_{k-1}$, which complies with the sequential specification of the
  ABD object.

  Next, suppose that the operation linearized at $t_{k-1}$ is an
  update operation $u = update(b_i, t, v)$ for some $t\in TS$ and
  $v \in \mathbb{V}$. Let $x_j$ denote the value of variable $x$ at
  time $t_j$.  The sequential specification of the ABD object requires
  the read to return $(t, v)$ if $t > b_i.ts_{k-2}$, and $b_{i,k-2}$,
  otherwise. We show that this is indeed the case.

  First, suppose that $t > b_i.ts_{k-2}$. Since no linearization
  points occur between $t_{k-2}$ and $t_{k-1}$, and $b_i$ can only be
  changed at a linearization point, at line~\ref{line:test},
  $exp.ts_{k-1} \le b_i.ts_{k-2} = b_i.ts_{k-1} < t$.  Hence,
  linearization point $t_{k-1}$ must occur at
  line~\ref{line:cas}. This means that \CAS\/ in line~\ref{line:cas}
  is successful as otherwise $old.ts_{k-1} \ge t$ implies that
  $old.ts_{k-1} = b_i.ts_{k-1} = b_i.ts_{k-2} \ge t$ contradicting the
  assumption. Therefore, the linearization point $t_{k-1}$ coincides
  with a successful \CAS\/ in line~\ref{line:cas} so that
  $b_{i,k-1} = (t, v)$.  Since no linearization points occur between
  $t_{k-1}$ and $t_k$, and $b_i$ can only be changed at a
  linearization point, $b_{i,k-1} = b_{i,k} = (t, v)$. Hence, the read
  will return $(t, v)$ as needed. 

  Finally, suppose that $t \le b_i.ts_{k-2}$. If $t \le exp.ts$, then
  linearization point $t_{k-1}$ occurs at line~\ref{line:test}, and
  therefore, $u$ returns without changing $b_i$. Hence,
  $b_{i,k-1} = b_{i,k-2}$. Suppose $t > exp.ts$, and consider the
  \CAS\/ invocation occurring at the first iteration of the loop in
  lines~\ref{line:loop-start}--\ref{line:loop-end}. Observe that this
  invocation must be unsuccessful as otherwise,
  $b_i.ts_{k-2} = exp.ts < t$ contradicting the assumption that
  $t \le b_i.ts_{k-2}$. At the same time,
  $old.ts = b_i.ts_{k-2} \ge t$. Hence, the condition in
  line~\ref{line:aftercas} is true, which implies that $u$ leaves the
  loop without changing the value of $b_{i,k-2}$ at $t_{k-1}$. We
  conclude that $b_{i,k-1} = b_{i,k-2}$. Thus, the read will return
  $b_{i,k-2}$ as required.
\end{proof}

\noindent
We next show that the ABDO emulation is wait-free if all its runs
satisfy timestamp uniqueness.

\begin{lemma}
  The ABDO emulation in Algorithm~\ref{alg:cas-abd} is wait-free
  provided all its runs satisfy timestamp uniqueness.
\label{lem:app:wait-free}
\end{lemma}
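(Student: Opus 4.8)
The plan is to reduce wait-freedom of the emulation to termination of the \textbf{repeat}--\textbf{until} loop in lines~\ref{line:loop-start}--\ref{line:loop-end} of $update$. The $read$ operation triggers a single \CAS\/ on $b_i$ and returns its response, so in any fair run it completes as soon as that one base-object operation responds; it is therefore wait-free with nothing further to show. The $update$ operation returns immediately when the guard $t > exp.ts$ in line~\ref{line:test} is false, so the only way it can diverge is by executing the loop forever. I would call an iteration \emph{continuing} if, after it, $done$ is still $false$; by the exit test in line~\ref{line:aftercas} this happens exactly when the \CAS\/ in line~\ref{line:cas} is unsuccessful (so the returned value $old$ differs from $exp$) and $old.ts < t$. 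The goal is then to bound the number of continuing iterations.

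First I would exploit the two invariants already in hand. By Lemma~\ref{lem:exp}, every time line~\ref{line:cas} is reached we have $t > exp.ts$, and by Lemma~\ref{lem:non-decreasing} the timestamp stored in $b_i$ is non-decreasing over the whole run. Now consider two consecutive continuing iterations $j$ and $j+1$. Iteration $j$ sets $exp \gets old_j$ in line~\ref{line:exp-old}, and iteration $j+1$ can continue only if its \CAS\/ fails, i.e.\ $b_i \neq exp = old_j$ at that moment, so the value it returns satisfies $old_{j+1} = b_i \neq old_j$. Lemma~\ref{lem:non-decreasing} gives $old_{j+1}.ts \ge old_j.ts$, so it remains to rule out equality.

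The step I expect to be the crux --- and the place where \emph{timestamp uniqueness} is indispensable --- is excluding $old_{j+1}.ts = old_j.ts$: a priori $b_i$ could change while keeping the same timestamp (e.g.\ a value oscillation), which would leave the loop permanently stuck on one timestamp. Here I would invoke the hypothesis that at most one invocation of the form $update(b_i, ts, \ast)$ occurs, so the value installed under any fixed timestamp is unique; consequently $old_{j+1}.ts = old_j.ts$ would force $old_{j+1} = old_j$, contradicting $old_{j+1} \neq old_j$. Hence $old_{j+1}.ts > old_j.ts$, so the timestamps observed along continuing iterations strictly increase while staying below $t$. Since only finitely many timestamps are smaller than $t$, such a strictly increasing sequence is finite, so only finitely many continuing iterations are possible; thereafter some \CAS\/ either succeeds or returns a value with timestamp $\ge t$, the test in line~\ref{line:aftercas} holds, and $update$ returns. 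This establishes wait-freedom of $update$, and together with the trivial wait-freedom of $read$, of the entire ABDO emulation.
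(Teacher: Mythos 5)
Your proof is correct and follows essentially the same route as the paper's: $read$ is trivially wait-free, and the $update$ loop terminates because Lemma~\ref{lem:non-decreasing} together with timestamp uniqueness forces the timestamps returned by successive failed \CAS\/ calls to strictly increase while remaining below $t$. Your framing via ``continuing iterations'' is, if anything, slightly cleaner than the appendix proof, which states $ts_{j+1} > ts_j$ directly and leaves implicit the case (spelled out only in the main text) where $b_i$ is unchanged and the next \CAS\/ therefore succeeds.
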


\begin{proof}
  Since the read operation is obviously wait-free, we only need to
  show that the update operation is wait-free as well.

  Consider an update invocation $u = update(b_i, t, v)$. If the
  condition in line~\ref{line:test} is false, then $u$ returns, and we
  are done. Otherwise, let $ts_j$, $j \ge 1$, be the value of $b_i.ts$
  before \CAS\/ is invoked at the $j$th iteration of the loop in
  lines~\ref{line:loop-start}--\ref{line:loop-end}. 

  At all iterations $j \ge 1$, if $ts_j \ge t$, then the condition in
  line~\ref{line:aftercas} is true, and the loop
  terminates. Otherwise, by Lemma~\ref{lem:non-decreasing} and
  timestamp-uniqueness, $ts_{j+1} > ts_j$. Since there are only
  finitely many timestamps between $ts_1$ and $t$, there exists an
  iteration where the condition in line~\ref{line:aftercas} is
  satisfied, and the loop terminates.
\end{proof}

\noindent
Given that timestamp uniqueness holds in all runs of MW-ABD, we
receive the following:

\begin{thm}
  The CAS-ABD algorithm is an $f$-tolerant implementation of a
  wait-free MWMR atomic register.
\label{thm:cas-abd}
\end{thm}

\begin{lemma} \label{la:tsgap} Let $op$ be an operation that invokes
  $update$ at time $t$ and let $op'$ be another operation that starts
  at time $t' \geq t$. If $k$ operations are invoked but do not
  complete before time $t$ then $ts(op').num \geq ts(op).num - k - 1$
\end{lemma}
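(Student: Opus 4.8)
The plan is to bound $ts(op).num$ in terms of $ts(op').num$ by tracing the \emph{provenance} of the timestamp number carried by $op$ backwards through the chain of write operations that successively created it, and then to use quorum intersection to argue that the first such write that has already completed by time $t$ must be visible to $op'$. Throughout, recall that in MW-ABD a write picks $ts.num = (\text{maximum number read}) + 1$ while a read merely propagates the highest number it reads; hence every number is \emph{created} by a unique write, and reads only relay existing numbers.

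I would construct a chain of writes $w_0, w_1, w_2, \dots$ as follows. Let $w_0$ be the write that created the number $ts(op).num$ (namely $op$ itself if $op$ is a write, or otherwise the write whose value $op$ read with the maximal number). For each $i$, the write $w_i$ obtained its number by reading a value with number $ts(w_i).num - 1$; let $w_{i+1}$ be the write that created that number. By construction the numbers along the chain strictly decrease by exactly one, so $ts(w_i).num = ts(w_0).num - i$. The key timing observation is that each $w_{i+1}$ is invoked strictly before $t$: the value carrying number $ts(w_{i+1}).num$ had to be physically stored at some base object read by $w_i$ before $w_i$ determined its own number, and unwinding the chain this precedes $op$'s read round, which completes before $t$. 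Thus every $w_i$ is invoked before $t$.

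Next I would locate the first index $p$ such that $w_p$ has \emph{completed} before $t$ (or, failing that, the point at which the chain bottoms out at the initial number $0$). All of $w_0,\dots,w_{p-1}$ are then invoked-but-not-completed before $t$, and since their numbers are distinct they are distinct pending operations. For the completed write $w_p$: having completed, it propagated its pair with number $ts(w_p).num = ts(w_0).num - p$ to a write-back quorum of $n-f$ base objects before $t$. Since $op'$ starts at $t' \ge t$ and its read round collects from $n-f$ objects, and $n > 2f$, the two quorums intersect; by the non-decreasing property of stored timestamps (Lemma~\ref{lem:non-decreasing}) the intersecting object still holds a number $\ge ts(w_p).num$ when $op'$ reads it. Hence the maximum number read by $op'$ is at least $ts(w_p).num$, giving $ts(op').num \ge ts(w_0).num - p$. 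If instead the chain reaches $0$ before meeting a completed write, then $ts(w_0).num \le p$ with all $w_i$ pending, and the bound holds trivially since $ts(op').num \ge 0$.

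Finally I would reconcile $ts(w_0).num$ with $ts(op).num$. In both the write case (where $w_0 = op$) and the read case (where $w_0$ is the creator of the maximal number $op$ reads) these numbers coincide, so no read/write increment is lost. The single additive unit of slack in the bound is instead consumed by whether $op$ itself is charged against the $k$ pending operations: when $op$ is a write it occupies slot $w_0$ of the chain, and if the count of $k$ excludes $op$ then only $w_1,\dots,w_{p-1}$ are charged against $k$, yielding $p \le k+1$ rather than $p \le k$, and thus $ts(op').num \ge ts(op).num - k - 1$. This is exactly where the stated $-k-1$ originates. The main obstacle I anticipate is the bookkeeping ensuring the chain is genuinely threaded through number-creating writes — reads merely relay existing numbers and must not be double-counted — together with the timing argument that every chain write is invoked before $t$; the latter follows once one notes that each relayed value must have been stored at some base object before the next chain member read it, and hence before $op$'s read round completes.
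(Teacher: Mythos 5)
Your proof is correct and follows essentially the same approach as the paper's: both bound the gap between $ts(op).num$ and the highest timestamp already propagated to a quorum before $t$ by observing that each unit of that gap requires a distinct operation invoked but not completed before $t$, of which there are only $k$ (plus possibly $op$ itself). The paper phrases this as a contradiction against a maximal completed operation $op''$, whereas you construct the provenance chain explicitly and locate its first completed member, but the counting and quorum-visibility ingredients are identical.
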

\vspace*{-0.2cm}
\begin{proof}
  Let $op''$ be the operation with the highest timestamp that returns
  before time $t$. By MW-ABD timestamp selection mechanism,
  $ts(op') \geq ts(op'')$. Therefore it is sufficient to prove that
  $ts(op'').num \geq ts(op).num - k - 1$.  Suppose, for the purpose of
  contradiction, that $ts(op'').num = ts(op).num - k - 2$. Since every
  operation increments num by at most one and the timestamp of $op$ is
  $ts(op)$, at least $k + 1$ operations must be invoked before time
  $t$ with timestamps strictly greater than $ts(op).num - k - 2$. At
  least one of these operations returns before time $t$ by the
  statement of our Lemma. This is a contradiction since $op''$ was
  chosen to be the operation with the highest timestamp that returns
  before $t$.
\end{proof}
\vspace*{-0.3cm}

\begin{lemma} \label{la:concurrentobstructions} Let $op$ be an
  operation that invokes $update$ at time $t$ and $op'$ be another
  operation that obstructs $op$ on some object $b_i$ but is not one of
  the first two operations to obstruct $op$ on $b_i$. Then $op'$ does
  not complete by time $t$.
\end{lemma}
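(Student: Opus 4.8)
The plan is to track the sequence of values that $op$ reads off $b_i$ during its update loop and to pin down, for each one, the time at which it was installed. Write $op$'s invocation as $update(b_i, ts(op), v)$, invoked at time $t$. Recall from lines~\ref{line:loop-start}--\ref{line:loop-end} that every non-terminating iteration reads the current content $old$ of $b_i$ via the \CAS\/ in line~\ref{line:cas}, finds $old \neq exp$ with $old.ts < ts(op)$, and sets $exp \becomes old$. By Lemma~\ref{lem:non-decreasing} the timestamps stored in $b_i$ are non-decreasing in time, and by timestamp uniqueness each value is installed by the \CAS\/ of a \emph{distinct} operation. Hence the values $old^{(1)}, old^{(2)},\dots$ that $op$ successively reads carry strictly increasing timestamps $ts_1 < ts_2 < \cdots$, and I would define the $j$th operation to obstruct $op$ on $b_i$ to be the unique operation whose successful \CAS\/ installed $ts_j$ in $b_i$.

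First I would fix the time line. Let $\tau_j$ denote the time of $op$'s \CAS\/ in line~\ref{line:cas} at the iteration where it reads $old^{(j)}$; since $op$ invokes this update only at time $t$, its first such \CAS\/ satisfies $\tau_1 \ge t$, and $\tau_1 < \tau_2 < \cdots$. The crux is the timing claim: for every $j \ge 2$, the successful \CAS\/ that installs $ts_j$ in $b_i$ occurs strictly after $\tau_{j-1}$. Indeed, at $\tau_{j-1}$ the content of $b_i$ is $old^{(j-1)}$ with timestamp $ts_{j-1}$, whereas at $\tau_j$ it is $old^{(j)}$ with timestamp $ts_j > ts_{j-1}$; since $b_i$ is monotone in time (Lemma~\ref{lem:non-decreasing}), the write that raises its timestamp to $ts_j$ must fall in the open interval $(\tau_{j-1}, \tau_j)$.

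Putting these together finishes the argument. For any $j \ge 2$, the defining \CAS\/ of the $j$th obstructing operation occurs at a time in $(\tau_{j-1}, \tau_j) \subseteq (\tau_1, \infty) \subseteq (t, \infty)$, i.e.\ strictly after $t$. That operation therefore still has a step pending at time $t$ and cannot have completed by $t$. In particular, every operation that is \emph{not} among the first two to obstruct $op$ on $b_i$ (indeed, any obstruction past the first) performs a step after $t$ and so does not complete by $t$, which is exactly the claim.

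The main obstacle — and the only place needing care — is the \emph{first} obstruction: the value $old^{(1)}$ read at $\tau_1$ may have been installed before $t$ (for instance by an operation that ran and completed before $op$ was even invoked), so the first obstructing operation genuinely can complete by $t$. This is why the statement must exempt the earliest obstruction; the reasoning above shows that exempting a single one already suffices, so exempting the first two is safe with room to spare. I would also verify two routine points the sketch glosses over: that each obstructing operation writes $b_i$ exactly once, so the ordering of the obstructions by $ts_1 < ts_2 < \cdots$ is well defined (immediate from timestamp uniqueness); and that intermediate writes $op$ never observes need not be counted, since ``obstruct'' refers precisely to the values that cause one of $op$'s \CAS\/ calls in line~\ref{line:cas} to fail.
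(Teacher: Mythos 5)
There is a genuine gap, and it sits in the single sentence that closes your argument: ``That operation therefore still has a step pending at time $t$ and cannot have completed by $t$.'' In this model a high-level operation can return while low-level operations it has triggered are still pending: the MW-ABD client awaits responses from only $n-f$ objects per round, so an operation $op'$ may return with its \CAS\/ on $b_i$ triggered but not yet applied, and that \CAS\/ may take effect (and succeed) arbitrarily later. This is the very mechanism the covering argument of Section~\ref{sec:space} exploits. Consequently, knowing that the successful \CAS\/ installing $ts_j$ \emph{takes effect} in $(\tau_{j-1},\tau_j]$, hence after $t$, does not let you conclude that the operation that issued it has not completed by $t$. Concretely, your claim that ``exempting a single one already suffices'' is false: the second obstructer can, before time $t$, receive $old^{(1)}$ as a \CAS\/ response, trigger a \CAS\/ with $exp=old^{(1)}$, return (having heard from $n-f$ other objects), and only then have that pending \CAS\/ land in $(\tau_1,\tau_2]$ while $b_i$ still holds $old^{(1)}$ --- obstructing $op$ despite having completed before $t$. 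This is exactly why the lemma exempts \emph{two} obstructions.

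The paper's proof closes this hole by going one level deeper. For the \emph{third} obstructer $op'$ to succeed, its expected value must equal the content of $b_i$ at that moment, whose timestamp is at least $ts_2$; by monotonicity (Lemma~\ref{lem:non-decreasing}) and timestamp uniqueness, every such value first appears in $b_i$ strictly after $\tau_1 > t$. Since $exp$ can only be set from the response of an earlier \CAS\/ (line~\ref{line:exp-old}), $op'$ must have \emph{received a response} carrying that value after $t$ and then \emph{triggered} its successful \CAS\/ after $t$. A trigger is a client action belonging to the operation and must precede its return, so $op'$ returns after $t$. Your timestamp-monotonicity bookkeeping for the values $old^{(1)}, old^{(2)},\dots$ is correct and is essentially the same scaffolding the paper uses; what is missing is the distinction between the time a \CAS\/ takes effect and the time it is triggered, and the extra step of indirection needed to force a post-$t$ trigger rather than merely a post-$t$ application.
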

\vspace*{-0.2cm}
\begin{proof}
  Since $op$ is obstructed at least three times, the following
  sequence of invocations on $b_i$ must occur ($op.b_i.CAS$ denotes
  the invocation of $CAS$ on register $b_i$ during high-level
  operation $op$):\\ $op.b_i.CAS \ldots op.b_i.CAS \ldots op.b_i.CAS$.
  Since all three invocations of $op.b_i.CAS$ fail (the third one due
  to $op'$), we know that there are at least three invocations of
  $b_i.CAS$ by other operations that succeed:
  $op'''.b_i.CAS \ldots op.b_i.CAS \ldots op''.b_i.CAS \ldots
  op.v_i.CAS \ldots op'.b_i.CAS \ldots op.b_i.CAS$.

  Since $op'.b_i.CAS$ succeeds updating $b_i$, it learns the value
  written by $op''.b_i.CAS$, which happens after the first invocation
  of $op.b_i.CAS$, which in turn must occur after update is invoked
  during $op$, i.e., after time $t$. Hence, $op'$ does not complete by
  time $t$.
\end{proof}
\vspace*{-0.3cm}

\begin{lemma} \label{la:sametimestampcontention} Let $op$ be an
  operation. For any constant $n$ the number of operations $op'$ that
  are concurrent with $op$ and such that $ts(op').num = n$ is at most
  $\PntCont(op)$.
\end{lemma}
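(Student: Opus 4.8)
The plan is to reduce the bound to a statement about \emph{simultaneously} active clients via a one-dimensional Helly (interval) argument, where the decisive structural fact is that two operations sharing the same timestamp number cannot be ordered in time.

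First I would establish the key claim: if $op_1 \neq op_2$ both satisfy $ts(op_1).num = ts(op_2).num = n$, then $op_1$ and $op_2$ overlap. Suppose not, say $op_1 \prec op_2$. By timestamp uniqueness the two operations carry distinct full timestamps, so $ts(op_1).c \neq ts(op_2).c$, and in particular $op_2$ is an operation that freshly selects its number. Since $op_1$ completes its write-back round over a quorum of $n-f$ objects before $op_2$ is invoked, and $op_2$'s read round queries a quorum of $n-f$ objects, the two quorums intersect because $n > 2f$. By the monotonicity of per-object timestamps (Lemma~\ref{lem:non-decreasing} together with linearizability of the underlying ABDO emulation), $op_2$ reads from that intersection a number at least $n$, so by the MW-ABD selection rule $ts(op_2).num \ge n+1$, contradicting $ts(op_2).num = n$. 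Hence same-number operations pairwise overlap.

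Next I would assemble the family $\{op\} \cup \{op'_1, \dots, op'_j\}$, where $op'_1, \dots, op'_j$ are the operations concurrent with $op$ having number $n$. Each $op'_i$ overlaps $op$ by hypothesis, and the $op'_i$ pairwise overlap by the key claim, so the whole family is pairwise intersecting. Viewing each operation as the closed real interval spanned by its invocation and return and invoking the one-dimensional Helly property (pairwise intersecting intervals share a common point), there is a time $t^*$ lying in every one of these intervals; in particular $t^* \in r_{op}$. At $t^*$ all of $op, op'_1, \dots, op'_j$ are in progress, and they belong to pairwise distinct clients: two operations of one client are sequential hence never concurrent, while same-number operations are concurrent by the key claim, so distinct same-number operations use distinct clients, and $op$'s client differs from each $op'_i$ since $op$ is concurrent with each. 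Thus the prefix of $r_{op}$ ending at $t^*$ witnesses $j+1$ clients with incomplete invocations, giving $\PntCont(op) \ge j+1$, which is even stronger than the claimed $j \le \PntCont(op)$.

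I expect the main obstacle to be the key claim, specifically pinning down the quorum-intersection-plus-monotonicity reasoning so that a strictly earlier same-number operation forces the later one to read a number $\ge n$ and therefore select $n+1$; this is where the $n > 2f$ assumption and the inherited MW-ABD timestamp discipline do the real work, and where I would need to be careful that the statement is read for the operations that \emph{introduce} a timestamp, since a read reuses an existing number and could otherwise repeat $n$ without overlapping.
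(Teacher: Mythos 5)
Your proposal is correct in substance but takes a genuinely different route from the paper. The paper argues by pigeonhole on clients: if $\PntCont(op)+1$ operations concurrent with $op$ carried the same $num$, some single client would have issued two of them, and since a client's operations are sequential the MW-ABD selection rule forces strictly increasing $num$ across them --- contradiction. It never needs your key claim that same-$num$ operations overlap, nor the quorum-intersection argument behind it. Your proof instead establishes that distinct writes with the same $num$ must be concurrent (via quorum intersection of the completed write-back round with the later read round, using $n>2f$ and the monotonicity of $b_i.ts$), and then uses the one-dimensional Helly property to exhibit a single instant $t^*$ inside $r_{op}$ at which all $j+1$ operations are simultaneously pending at distinct clients, yielding $\PntCont(op)\ge j+1$. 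What your route buys is precisely the step the paper's pigeonhole leaves implicit: $\PntCont(op)$ is a \emph{per-instant} maximum, so bounding the total number of distinct clients with operations concurrent with $op$ by $\PntCont(op)$ requires knowing that the relevant operations are all alive at one common instant --- your overlap-plus-Helly argument supplies exactly that witness, and even gives the slightly stronger bound $\PntCont(op)-1$ on the number of \emph{other} same-$num$ operations. What the paper's route buys is brevity and independence from the quorum structure: it charges a repeated $num$ to a single sequential client rather than to simultaneity. One caveat applies to both arguments and you correctly flag it: the overlap claim (and the paper's ``strictly increasing $num$ per client'') holds only for operations that \emph{select} a fresh timestamp, i.e., writes; a read that merely propagates an existing timestamp can repeat $num=n$ without overlapping anything, so the lemma must be read as restricted to timestamp-introducing operations. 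Your one boundary point to tidy up is the Helly step for pending operations (treat an incomplete operation as a right-unbounded interval), which is routine.
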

\vspace*{-0.2cm}
\begin{proof}
  Suppose for the sake of contradiction that there exists a constant
  $n$ such that there are $\PntCont(op)+1$ operations concurrent with
  $op$ with the first component of their timestamp equal to $n$. Since
  there are $\PntCont(op)+1$ operations and at most $\PntCont(op)$
  clients executing operations concurrently with $op$ at any single
  point in time (by definition of point contention), there is a
  client that executes two operations, both of which have the same
  first component of the timestamp. However, since each client
  executes operations sequentially, MW-ABD timestamp selection
  mechanism guarantees that the $num$ component of the first timestamp
  will be greater than that of the second one. A contradiction.
\end{proof}

\begin{thm}[Time Complexity]
  The CAS-ABD time complexity is {\em adaptive}\/ to concurrency
  guaranteeing that each operation $op$ terminates in at most $O(c^2)$
  base object accesses where $c=\PntCont(op)$.
\end{thm}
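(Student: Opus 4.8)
The plan is to reduce the whole claim to bounding the number of iterations of a single $update(b_i, t, v)$ loop (lines~\ref{line:loop-start}--\ref{line:loop-end}) executed on behalf of a high-level operation $op$, where $t = ts(op)$. Indeed, each high-level CAS-ABD operation consists of a constant number of MW-ABD rounds (a read round followed by an update/writeback round), and within each round the client accesses every object in parallel, performing at most one ABDO $read$ (a single \CAS, hence $O(1)$) or one ABDO $update$ per object. Consequently, the number of base object accesses along the critical path is dominated, up to a constant factor, by the longest single-object $update$ loop, and it suffices to show that this loop runs $O(c^2)$ times, where $c=\PntCont(op)$.

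Next I would bound the number of loop iterations by the number of distinct timestamps smaller than $t$ that are ever stored in $b_i$ during the execution of $op$'s update. Each iteration that re-enters the loop corresponds to an unsuccessful \CAS\/ in line~\ref{line:cas} returning some $old \neq exp$ with $old.ts < t$ (otherwise the test in line~\ref{line:aftercas} would terminate the loop). By timestamp uniqueness, $old.ts = exp.ts$ would force $old = exp$; hence a failed \CAS\/ yields $old.ts > exp.ts$, and together with Lemma~\ref{lem:non-decreasing} the values of $exp.ts$ observed across iterations form a strictly increasing sequence, all bounded above by $t$. Thus the iteration count is at most the number of distinct timestamps $< t$ written into $b_i$, each of which is the timestamp $ts(op')$ of some operation $op'$ whose successful \CAS\/ obstructs $op$ on $b_i$.

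I would then count these obstructing timestamps using the three combinatorial lemmas. The first two obstructions on $b_i$ contribute only a constant; by Lemma~\ref{la:concurrentobstructions}, every further obstructing operation $op'$ does not complete before $op$'s update is invoked, and since $op'$ also writes $b_i$ during $op$'s loop, it is concurrent with $op$. I split these concurrent obstructors into those already in flight at the invocation time and those starting at or after it. At most $c$ operations are incomplete at any single point (by the definition of point contention), so the in-flight group contributes at most $c$ distinct timestamps. For the group starting at or after the invocation, Lemma~\ref{la:tsgap} (applied with $k \le c$) gives $ts(op').num \ge ts(op).num - c - 1$, while $ts(op') < t$ forces $ts(op').num \le ts(op).num$; hence only $O(c)$ distinct $num$ values can arise. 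Finally, Lemma~\ref{la:sametimestampcontention} bounds by $c$ the number of operations concurrent with $op$ that share any fixed $num$ value, so each admissible $num$ accounts for at most $c$ distinct timestamps.

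Combining the estimates, the total number of distinct timestamps $< t$ driving the loop is at most $2 + c + O(c)\cdot c = O(c^2)$, so every single-object $update$ loop terminates within $O(c^2)$ \CAS\/ invocations and, by the reduction in the first paragraph, every high-level operation $op$ completes within $O(c^2)$ base object accesses. The hard part will be pinning down the $O(c)$ bound on the range of $num$ values: Lemma~\ref{la:tsgap} controls only operations that start no earlier than $op$'s update invocation, so the concurrent obstructors that were already in progress must be handled separately through the raw point-contention bound and then merged with the $num$-range argument without double counting.
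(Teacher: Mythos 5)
Your proposal is correct and follows essentially the same route as the paper's proof: the same three-way classification of obstructing operations (completed before the update invocation, in flight at that time, and started afterwards), bounded respectively by Lemma~\ref{la:concurrentobstructions}, the definition of point contention, and the combination of Lemmas~\ref{la:tsgap} and~\ref{la:sametimestampcontention}, yielding the same $O(c)+O(c)\cdot c = O(c^2)$ count. The only cosmetic difference is that you phrase the count in terms of distinct timestamps written into $b_i$ whereas the paper counts obstructing operations and separately argues each can obstruct at most once; these are equivalent.
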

\vspace*{-0.2cm}
\begin{proof}
  Let $t$ be the time when $op$ invokes update. There are three types
  of operations that can obstruct $op$: (1) an operation that
  completes before time $t$; (2) an operation that starts but does not
  complete before time $t$; and (3) an operation invoked at time $t$
  or later. We next quantify the number of operations of each type
  that can obstruct $op$.

  By Lemma~\ref{la:concurrentobstructions} at most two operation
  completing before time $t$ can obstruct $op$ on a given
  register. Thus, at most two operations fall into the first
  category. By definition of $\PntCont(op)$, the number of operations
  of the second type is at most $\PntCont(op)$. By
  Lemma~\ref{la:tsgap}, this also implies that any operation $op'$ of
  the third type, that is, starting at time $t$ or later, satisfies
  $ts(op).num - ts(op').num \leq \PntCont(op) + 1$. Since operations
  with timestamps higher than $ts(op)$ cannot obstruct $op$ (see
  line~\ref{line:aftercas}), we only care about the case
  $0 \leq ts(op).num - ts(op').num$. There are at most
  $\PntCont(op) + 2$ numbers in this range. Since all operations that
  start at time $t$ or later and obstruct $op$ are concurrent with
  $op$, by Lemma~\ref{la:sametimestampcontention} there are at most
  $\PntCont(op)$ such operations whose first timestamp component is
  each of the numbers in the range described above. Overall, there are
  at most $(\PntCont(op)+2)*\PntCont(op)$ operations with timestamps
  in this range, and in total there are
  $\PntCont(op)^2 + 3\PntCont(op) + 2$ operations that may obstruct
  $op$.

  Notice that an operation $op'$ can obstruct $op$ on an object $b_i$
  only by changing the value of $b_i$ using $\CAS$ on
  line~\ref{line:cas}. By the specification of $\CAS$, the old value
  of $b_i$ was the expected value passed to $\CAS$ in this invocation
  during $op'$. By the conditions on lines~\ref{line:aftercas}
  and~\ref{line:loop-end}, once this $\CAS$ returns, update
  terminates, and $op'$ returns. This means that $op'$ can obstruct
  $op$ at most once. Since each operation can obstruct $op$ at most
  once, $\PntCont(op)^2 + 3\PntCont(op) + 2$ is an upper bound on the
  number of times a CAS invocation during $op$ can fail (for each
  object). 
\end{proof}

\end{document}